\documentclass[12pt]{article}

\usepackage{ifthen}
\usepackage{mdwlist}
\usepackage{amsmath,amssymb,amsfonts,amsthm,bbm}
\usepackage{bm}
\usepackage[colorlinks=true,linkcolor=red,citecolor=blue,urlcolor=blue]{hyperref}
\usepackage{enumitem}
\usepackage{graphicx}
\usepackage{xspace}
\usepackage{verbatim}
\usepackage{algorithm}
\usepackage[noend]{algpseudocode}
\usepackage[margin=0.8in]{geometry}
\usepackage{color}
\usepackage{thm-restate}
\usepackage{latexsym}
\usepackage{epsfig}
\usepackage{subcaption}
\usepackage{tabularx}
\usepackage{mathrsfs}
\usepackage{xcolor}
\usepackage{tcolorbox}
\usepackage{appendix}
\usepackage{titlesec}
\usepackage[colorinlistoftodos,textsize=scriptsize]{todonotes}
\usepackage[normalem]{ulem}
\usepackage{natbib}
\usepackage{booktabs}
\definecolor{bgcolor}{RGB}{254, 252, 232}
\definecolor{OliveGreen}{rgb}{0,0.6,0}
\def\withcolors{1}
\def\withnotes{1}

\titlespacing\section{0pt}{12pt plus 4pt minus 2pt}{0pt plus 2pt minus 2pt}
\titlespacing\subsection{0pt}{12pt plus 4pt minus 2pt}{0pt plus 2pt minus 2pt}
\titlespacing\subsubsection{0pt}{12pt plus 4pt minus 2pt}{0pt plus 2pt minus 2pt}
\titlespacing\subsubsubsection{0pt}{12pt plus 4pt minus 2pt}{0pt plus 2pt minus 2pt}

\newtheorem{theorem}{Theorem}

\newtheorem{fact}[theorem]{Fact}
\newtheorem{lemma}[theorem]{Lemma}

\newtheorem{corollary}[theorem]{Corollary}

\newtheorem{definition}[theorem]{Definition}

\newtheorem{assumption}[theorem]{Assumption}

\numberwithin{theorem}{section} 
\numberwithin{nontheorem}{section} 
\numberwithin{proposition}{section} 
\numberwithin{observation}{section} 
\numberwithin{remark}{section} 
\numberwithin{fact}{section} 
\numberwithin{lemma}{section} 
\numberwithin{claim}{section} 
\numberwithin{corollary}{section} 
\numberwithin{case}{section} 
\numberwithin{dfn}{section} 
\numberwithin{definition}{section} 
\numberwithin{question}{section} 
\numberwithin{openquestion}{section} 
\numberwithin{res}{section} 
\numberwithin{assumption}{section}
\numberwithin{example}{section} 
\numberwithin{conjecture}{section}

\ifnum\withcolors=1
   
\else

\fi

\ifnum\withnotes=1

\else
  \newcommand{\gnote}[1]{}
  \newcommand{\gfootnote}[1]{}

\fi
\newcommand{\ignore}[1]{\leavevmode\unskip} 








\newcommand{\norm}[1]{\lVert#1\rVert}







\newcommand{\Esymb}{\mathbb{E}}
\newcommand{\Psymb}{\mathbb{P}}
\newcommand{\Vsymb}{\mathbb{V}}
\DeclareMathOperator*{\E}{\Esymb}
\DeclareMathOperator*{\Var}{\Vsymb}
\DeclareMathOperator*{\ProbOp}{\Psymb}
\renewcommand{\Pr}{\ProbOp}





















\newcommand\bdot\bullet










\newcommand{\R}{\mathbb R}





\renewcommand{\leq}{\leqslant}

\renewcommand{\geq}{\geqslant}

\let\epsilon=\varepsilon
\newcommand\MYcurrentlabel{xxx}
\newcommand{\MYstore}[2]{%
	\global\expandafter \def \csname MYMEMORY #1 \endcsname{#2}%
}
\newcommand{\MYload}[1]{%
	\csname MYMEMORY #1 \endcsname%
}
\newcommand{\MYnewlabel}[1]{%
	\renewcommand\MYcurrentlabel{#1}%
	\MYoldlabel{#1}%
}
\newcommand{\MYdummylabel}[1]{}
\newcommand{\torestate}[1]{%
	\let\MYoldlabel\label%
	\let\label\MYnewlabel%
	#1%
	\MYstore{\MYcurrentlabel}{#1}%
	\let\label\MYoldlabel%
}
\newcommand{\restatetheorem}[1]{%
	\let\MYoldlabel\label
	\let\label\MYdummylabel
	\begin{theorem*}[Restatement of \cref{#1}]
		\MYload{#1}
	\end{theorem*}
	\let\label\MYoldlabel
}
\newcommand{\restatelemma}[1]{%
	\let\MYoldlabel\label
	\let\label\MYdummylabel
	\begin{lemma*}[Restatement of \cref{#1}]
		\MYload{#1}
	\end{lemma*}
	\let\label\MYoldlabel
}
\newcommand{\restateprop}[1]{%
	\let\MYoldlabel\label
	\let\label\MYdummylabel
	\begin{proposition*}[Restatement of \cref{#1}]
		\MYload{#1}
	\end{proposition*}
	\let\label\MYoldlabel
}
\newcommand{\restatefact}[1]{%
	\let\MYoldlabel\label
	\let\label\MYdummylabel
	\begin{fact*}[Restatement of \cref{#1}]
		\MYload{#1}
	\end{fact*}
	\let\label\MYoldlabel
}
\newcommand{\restate}[1]{%
	\let\MYoldlabel\label
	\let\label\MYdummylabel
	\MYload{#1}
	\let\label\MYoldlabel
}

\allowdisplaybreaks
\sloppy

\DeclareMathOperator{\diag}{diag}



\makeatletter
\newcommand{\distas}[1]{\mathbin{\overset{#1}{\kern\z@\sim}}}%
\newsavebox{\mybox}\newsavebox{\mysim}
\newcommand{\distras}[1]{%
	\savebox{\mybox}{\hbox{\kern3pt$\scriptstyle#1$\kern3pt}}%
	\savebox{\mysim}{\hbox{$\sim$}}%
	\mathbin{\overset{#1}{\kern\z@\resizebox{\wd\mybox}{\ht\mysim}{$\sim$}}}%
}
\makeatother


\def\*#1{\mathbf{#1}}

\def\HiLi{\leavevmode\rlap{\hbox to \hsize{\color{yellow!50}\leaders\hrule height 1.2 \baselineskip depth .5ex\hfill}}}

\newcommand{\mvmeaniter}{\textsc{MVMRec}}
\newcommand{\mvmeanstep}{\textsc{MVM}}

\newcommand{\gvdp}{\textsc{GVDP}}
\newcommand{\blb}{\textsc{BLB}}

\definecolor{dred}{rgb}{0.75, 0.0, 0.2}
\definecolor{dgreen}{rgb}{0.33, 0.42, 0.18}
\definecolor{dblue}{rgb}{0.0, 0.2, 0.4}
\definecolor{crimson}{rgb}{0.86, 0.08, 0.24}

\graphicspath{./}

\begin{document}

\title{\bf Unbiased Statistical Estimation and Valid Confidence Intervals Under Differential Privacy \thanks{Authors XH, JH, GK listed in alphabetical order}}
\author{Christian Covington \thanks{Harvard University. {\tt ccovington@g.harvard.edu}. Work conducted under support from an NSERC Discovery Grant while a student at the University of Waterloo.} \and
      Xi He \thanks{Cheriton School of Computer Science, University of Waterloo. {\tt xihe@uwaterloo.ca}. Supported by an NSERC Discovery Grant. } \and
      James Honaker \thanks{Facebook \& Harvard John A. Paulson School of Engineering and Applied Sciences {\tt james@hona.kr} } \and  
      Gautam Kamath \thanks{Cheriton School of Computer Science, University of Waterloo. {\tt g@csail.mit.edu}. Supported by an NSERC Discovery Grant and a University of Waterloo startup grant.}
      }
\maketitle

\bigskip
\begin{abstract}
  We present a method for producing unbiased parameter 
estimates and valid confidence regions/intervals under the constraints of 
differential privacy, a formal framework for limiting individual information leakage from sensitive data. 
Prior work in this area is limited in that it is
tailored to calculating confidence intervals
for specific statistical procedures, such as mean estimation or simple linear regression.
While other recent work can produce confidence intervals for more general sets of procedures, they 
either yield only approximately unbiased estimates, 
are designed for one-dimensional outputs, or assume significant user knowledge about the data-generating 
distribution. 
Our method induces distributions of mean and covariance estimates via the bag of little bootstraps (BLB)~\citep{KTSJ14} and uses 
them to privately estimate the parameters' sampling distribution 
via a generalized version of the CoinPress estimation algorithm~\citep{BDKU20}. 
If the user can bound the parameters of the BLB-induced parameters and provide heavier-tailed families,
the algorithm produces unbiased parameter estimates and valid confidence intervals
which hold with arbitrarily high probability. These results hold in high dimensions and for any estimation procedure which behaves 
nicely under the bootstrap. 
\end{abstract}

\newpage

\section{Introduction}
\label{sec:introduction} 

\subsection{Overview}
\label{subsection:overview}

The dramatic expansion of data collection and analysis has led to growing concerns around the role of privacy and security in the modern world.
Our particular focus will be on statistical analysis and how it can leak information about individuals 
in a data set being analyzed. 

Statistical agencies, in particular, have been concerned with 
\emph{statistical disclosure limitation}, a broad set of techniques used to limit the leakage of sensitive information from statistical analyses.
\citet{DL86, DL89} and \citet{Rei05} 
work on identifying and quantifying disclosure risk under different assumptions. There has also been substantial work developing various privacy definitions such as 
k-anonymity~\citep{Swe02}, t-closeness~\citep{LTV07}, and l-diversity~\citep{MKGV07}
among others.

\citet{DN03} developed a polynomial data reconstruction algorithm and used it to prove a result 
later coined in~\citet{DR14} as the \emph{Fundamental Law of Information Recovery}.
Roughly, this law states that an attacker can reconstruct a data set by asking a sufficiently large 
number of cleverly chosen queries of the data.
This inspired the invention of a more formal privacy definition called \emph{differential privacy} (DP)~\citep{DMNS06}.
DP is a definition which requires that, for any two data sets differing in one row, 
the change in the distribution of answers to any possible query between the two data sets is minimal. 
We loosely define a ``DP Algorithm'' as a procedure that takes a statistical estimator and 
converts it into an estimator that satisfies DP (i.e.\ a ``DP Estimator''); we define what it means to ``satisfy DP'' later. 

DP has become a popular tool in many corners of industry but has not been widely applied to research
in many fields that often analyze sensitive data (social sciences, medicine, etc.).
We suggest that this is, in part, because of a lack of DP algorithms that effectively meet the 
needs of these fields. First, DP algorithms typically require the user to, 
\emph{without looking at the data}, specify a domain to which the data will be censored/clipped.
This is because privacy mechanisms often take the form of additive noise, with a variance parameter 
that scales with the sensitivity of the function being privatized 
(see Definition~\ref*{appendix:defn:global_sensitivity} and Lemma~\ref*{appendix:lemma:gaussian_mechanism}
for more details, and note that
any theorem, algorithm, section, etc.~whose number is preceded by an $S$ can be found in the online supplement).
Many functions of interest, e.g.\ means and variances,
are unbounded (and thus have infinite sensitivity) when defined over unbounded data domains, so we achieve 
finite sensitivity by bounding the data domain, i.e.\ clipping/censoring the data. 

We claim that bounding the data domain effectively is a difficult problem in general, and potentially 
introduces substantial error into the DP pipeline which is difficult to account for.
In particular, without assumptions that the data bounds are set well, DP estimators do not typically yield basic statistical guarantees 
that many applied researchers desire, namely unbiasedness and valid confidence intervals. 
Our goal is to provide a framework (i.e. a DP algorithm) for converting non-private estimators 
to DP estimators in a way that jointly addresses both of these concerns.

\subsection{Problem Demonstration}
\label{subsection:problem_demonstration}

Many DP mechanisms take the form ``non-private statistic plus zero-mean noise'' 
(e.g.\ the \emph{Gaussian mechanism} defined in Lemma~\ref*{appendix:lemma:gaussian_mechanism}) and thus look 
like they ought to be unbiased and yield easily characterizable confidence regions.
However, this doesn't take into account the necessary step of choosing a bounded data domain.
In practice, analysts are required to specify this domain without looking at the data.
The data are then projected into the domain, and analysis proceeds on the projected data.
This introduces a potential trade-off in the analyst's decision calculus.
The analyst generally wants to specify a small domain, 
as this generally requires less noise addition to satisfy DP. 
However, if the analyst's domain is too small, they risk clipping the data, which can lead to 
biased estimates.

Consider the case where $X = \{X_1, \hdots, X_n\}$ with $X_i \sim N(0,1)$ and
$y = X \beta + \epsilon$ for $\beta = 100, \epsilon \sim N(0, 10^2)$.
We estimate $\beta$ and get associated 95\% confidence intervals using OLS 
and test the effect of various levels of clipping on the estimates and confidence intervals. 
Specifically, we leave $X$ unclipped and clip the top $\{0, 0.1, 1, 5\}$ 
percent of $y$. 

\begin{figure}[H]
    \captionsetup{font=footnotesize,labelfont=footnotesize}
    \begin{subfigure}[t]{0.32\textwidth}
        \centering
        \includegraphics[width=\textwidth]{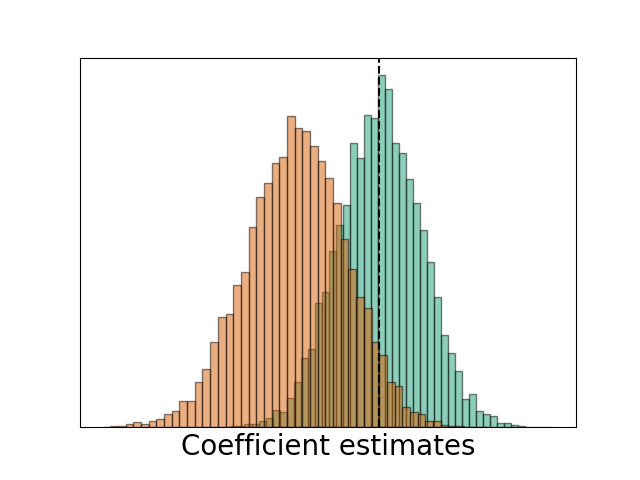}
        \caption{Bias: 0.1\% clipping}
    \end{subfigure}
    \begin{subfigure}[t]{0.32\textwidth}
        \centering
        \includegraphics[width=\textwidth]{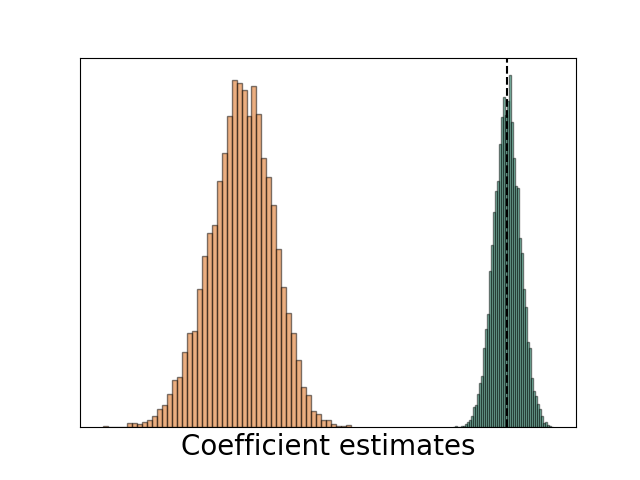}
        \caption{Bias: 1\% clipping}
    \end{subfigure}
    \begin{subfigure}[t]{0.32\textwidth}
        \centering
        \includegraphics[width=\textwidth]{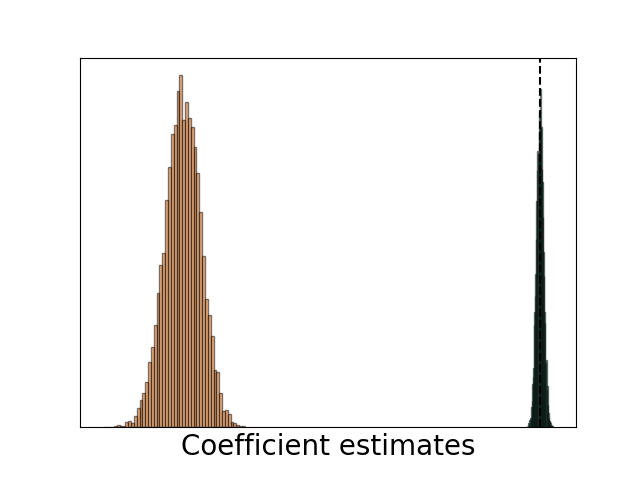}
        \caption{Bias: 5\% clipping}
    \end{subfigure}
    \begin{subfigure}[t]{0.32\textwidth}
        \centering
        \includegraphics[width=\textwidth]{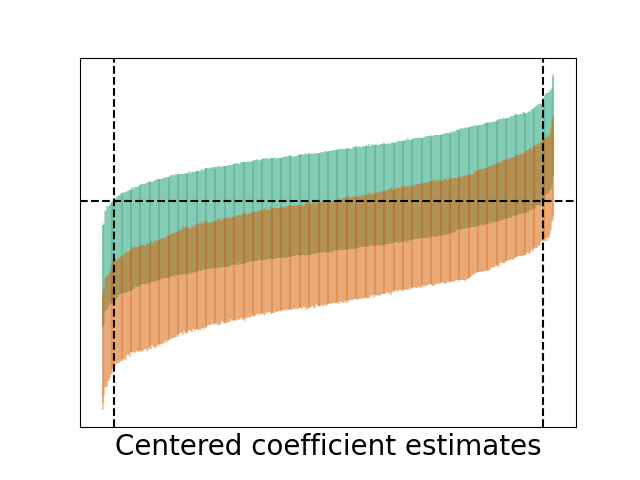}
        \caption{CI coverage: 0.1\% clipping}
    \end{subfigure}
    \begin{subfigure}[t]{0.32\textwidth}
        \centering
        \includegraphics[width=\textwidth]{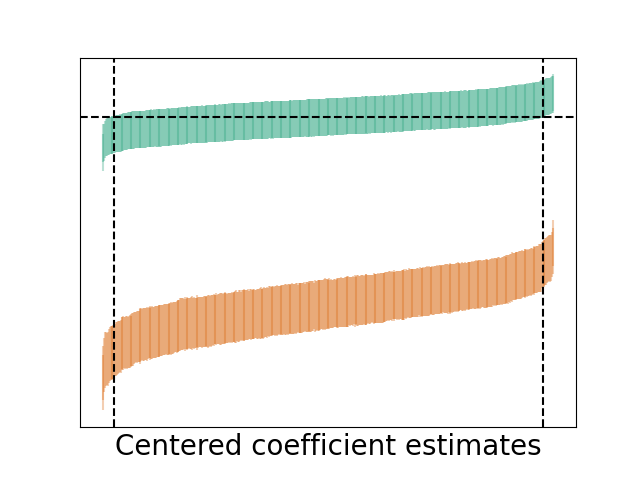}
        \caption{CI coverage: 1\% clipping}
    \end{subfigure}
    \begin{subfigure}[t]{0.32\textwidth}
        \centering
        \includegraphics[width=\textwidth]{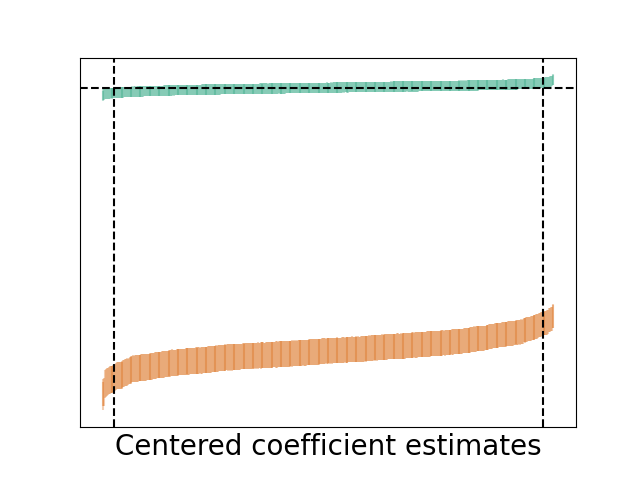}
        \caption{CI coverage: 5\% clipping}
    \end{subfigure}

    \caption{Distribution of OLS coefficient estimates (a-c) and 95\% confidence intervals (d-f) under different levels of clipping of $y$.
             Non-clipped distribution in green, clipped distribution in orange.}
    \label{figure:ols_clipping}
\end{figure} 

Figure~\ref{figure:ols_clipping} shows results from $10{,}000$ simulations of this process. 
The top plots show coefficient distributions under each level of clipping and compare it to the condition of 
no clipping. Note that at even a moderate level of $1\%$ clipping, the distributions of estimates are completely 
non-overlapping. The bottom plots show the absolute error of the estimates, arranged in increasing order on the x-axis, with vertical bars 
representing the $95\%$ confidence interval for that estimate. The black dotted vertical lines on the left and right sides of each plot show the $0.025$ and $0.975$
quantiles, where we expect perfectly calibrated confidence intervals to cross the x-axis. At $0.1\%$ 
clipping, approximately half of our confidence intervals do not contain the true parameter value;
at higher levels of clipping, none of them do.

We chose a very simple regime for the experiments above: 
one-dimensional OLS with a Gaussian covariate, Gaussian error, clipping in only 
the outcome variable, and no attempt to satisfy DP (i.e.\ no noise addition). 
In more complex settings, the effect of clipping on the coefficients could be larger, 
in addition to being harder to predict and reason about.
Thus, we argue that the data bounding step that precedes so many DP algorithms has 
potentially immense consequences for practical analysis and ought to be seriously considered.

\subsection{Related Work}
\label{subsection:related_work}

Differential privacy has grown in popularity in recent years, as has the literature
exploring the intersection between statistics and DP. \citet{DL09} point out how a handful of common 
robust statistical estimators could be extended to satisfy differential privacy. 
\citet{WZ10} compare DP mechanisms via convergence rates of distributions and densities 
from DP releases and frame DP in statistical language more broadly. \citet{LCSSF16} explores 
model selection under DP, while \citet{PB21} explores model uncertainty. 
\citet{VS09, WLK15, GLRV16, CKMSU19}, and \citet{AS19}
propose methods for DP hypothesis testing in various domains. 
\citet{She17, Wan18, BRMY19}, and~\citet{AMSSV20} all address the problem of 
differentially private linear regression.

\citet{KV18} gives nearly optimal confidence intervals for univariate Gaussian mean estimation with finite sample 
guarantees. \citet{DFM+20} proposes their own algorithms for the same problem and finds superior practical 
performance in some domains, and \citet{BDKU20} develop an algorithm that works well at reasonably small sample sizes
and without strong assumptions on user knowledge, 
while also scaling well to high dimensions. 
\citet{DGH+21} explores non-parametric confidence intervals for calculating medians.
\citet{DHK15} gives confidence intervals for a difference of means.
\citet{ABL21} shows how to construct DP version of M-estimators, as 
well as associated confidence regions.

Our work continues in a line of recent work for constructing confidence intervals for more general classes of 
differentially private estimators.
\citet{BH18} shows how to combine estimates from additive functions that satisfy zCDP to get confidence intervals 
at no additional cost. 
\citet{WKL19} provides confidence intervals for models trained with objective or 
output perturbation algorithms. These algorithms are quite general, but require 
solving the non-private ERM sub-problem optimally. 
\citet{FWS21} presents a very general approach based on privately estimating parameters of the 
data-generating distribution and bootstrapping confidence intervals by 
repeatedly running the model of interest on samples from a distribution parameterized by the privately estimated parameters.
This method is efficient with respect to its use of the privacy budget, but relies on significant knowledge of 
the structure of the data-generating process.
With the exception of~\citet{KV18}, these works either ignore the issue of bounding the data domain $\mathcal{X}$ effectively, 
or attempt to address it through bias-correction strategies which we believe are unlikely to work for complex problems.
\citet{BWSB21} tests a variety of DP algorithms for various tasks (ranging from 
tabular statistics to OLS regression) on real data and argue that existing methods for 
performing DP regression, ``would struggle to produce accurate regression estimates and confidence intervals'' 
(\citet{BWSB21}, p. 1).

\citet{EKST19} is the closest existing work to ours. They also start with the Sample \& Aggregate framework,
and BLB algorithm to get $k$ estimates of the parameters of the sampling distribution of their non-private estimator.
The $k$ estimates are then aggregated via a differentially private mean and confidence intervals are calculated 
using a differentially private variance estimate and CLT assumption. Because the estimates are projected into a 
bounded paramter domain to control the sensitivity of the mean, the resulting private mean is not necessarily unbiased. \citet{EKST19} attempts to address this issue by privately estimating the proportion of the $k$ estimates 
that are clipped by the projection and adjusting the private mean by the estimated clipping proportions.
This method has the advantage of allowing users to specify overly tight clipping bounds in order 
to decrease the global sensitivity of their estimator, but 
is sensitive to how well the clipping proportions are estimated and, to our knowledge,
has no means of generalizing to multivariate parameter estimation.

\subsection{Contributions}
\label{subsection:contributions}

We introduce a general-purpose meta-algorithm that allows an analyst to take 
any estimator that is (1) unbiased 
in the non-private setting and (2) has ``nice'' properties under the bootstrap and
produce a version that satisfies DP and, 
with high probability, is unbiased and produces valid confidence intervals or a 
valid confidence region.
Our results hold under the central (or trusted curator) model of DP.

Our algorithm can be split into three distinct steps, each of which we explain in detail in Section~\ref{section:algorithm_overview}.
First, we use the Bag of Little Bootstraps (BLB) algorithm (Algorithm~\ref*{appendix:alg:blb}) developed by~\citet{KTSJ14}
to produce estimates of the mean and covariance of the sampling distribution of the non-private estimator. 

Second, we privately estimate the mean of each set of BLB estimates (both the means and covariances) using a modified 
version of the CoinPress private mean estimation algorithm (Algorithm~\ref*{appendix:alg:hd_mean_est} \citep{BDKU20}).
For both the mean and covariance distributions induced by the BLB, 
the analyst must provide a distribution that is heavier-tailed 
(Definition~\ref*{appendix:defn:heavier_tails} and Assumption~\ref{assump:user_heavier_tailed}),
as well as give bounds on the mean (Assumption~\ref{assump:initial_parameter_bounds})
and covariance (Assumption~\ref{assump:initial_covariance_bounds}) of the induced distribution.
Under these conditions, this step produces parameter estimates which are unbiased and follow a
multivariate Gaussian distribution
with known covariance
(Theorem~\ref{theorem:mean_est_form}).
We argue that these conditions are natural and demonstrate how the 
properties of the CoinPress algorithm allow the analyst to get good performance 
even when they set the aforementioned bounds very conservatively. 

Although our use of CoinPress in this way guarantees that DP is satisfied with respect to the BLB estimates, 
this guarantee also holds for the underlying sensitive data over which the BLB 
estimates were calculated. This fact follows from noting that combining the BLB with 
an aggregation step that satisfies DP (such as CoinPress) falls under the purview of the 
Sample \& Aggregate framework developed in~\citet{NRS07}. 

Third, we combine the estimated parameters using precision weighting to produce
final mean and covariance estimates 
(Theorems~\ref{thm:tilde_theta} and~\ref{theorem:tilde_Sigma}),
which are used to calculate a valid confidence region/intervals 
(Theorem~\ref{theorem:general_cr_high_prob}).
We do so via a multivariate extension of the precision-weighting technique to improve CoinPress' estimates 
(Theorem~\ref{theorem:precision_weighting}). 
While precision-weighting is a well-known technique in the meta-analysis literature~\cite{Coc54}, 
we give, to the best of our knowledge, the first proof of multivariate optimality, 
which may be of independent interest.
This step maintains the privacy guarantees of the CoinPress algorithm 
because differential privacy is preserved under postprocessing (Lemma~\ref{lemma:zcdp_postprocessing}).

We believe that our framework is a promising step toward making differential privacy more practical 
for applied research. 
The problem of choosing good data bounds 
is significant in practice in that it is both generally difficult and that
many DP algorithms are sensitive to poor choices. There is also currently an asymmetry in the 
failure modes, in that bounds that are too wide typically yield 
answers which are unbiased but very noisy, while bounds that are too narrow
risk ``silent failure'', where the DP result looks precise but is not actually 
representative of the non-private answer.  
Our framework, through use of the CoinPress algorithm, 
gives users more leeway to err on the side of conservatism without introducing bias, thus mitigating the 
possibility of getting DP results that appear precise but are systematically incorrect.

Moreover, our framework is general enough to be applied to any estimator
for which the BLB does a ``good'' job approximating the sampling distribution
of the estimator. 
The bootstrap is broadly familiar to applied statisticians, and its properties 
for any particular estimator an analyst wishes to use can, in principle, be tested on non-sensitive data. 
Thus, answering the question of whether or not our algorithm will be useful in a particular setting
does not require significant knowledge of DP.
\section{Algorithm Overview}
\label{section:algorithm_overview}

\subsection{DP Preliminaries}
\label{subsection:dp_preliminaries}

Throughout this work we use a particular notion of DP called zero-concentrated DP.
Suppose we have a data domain $\mathcal{X}$ and data set $X \in \mathcal{X}^n$.

\begin{definition}[Zero-concentrated differential privacy (zCDP)~\citep{BS16}]
    \label{defn:zcdp}
    Let $\mathcal{M}: \mathcal{X}^n \rightarrow \Omega$ be a randomized algorithm where 
    $\left( \Omega, \Sigma, \Pr \right)$ is a probability space and $\rho \geq 0$.
    We say that $\mathcal{M}$ satisfies $\rho$-zCDP with respect to 
    a data set $X$ if, for all 
    $(X', X^*) \in \mathcal{X}_n$ and $\alpha \in (1, \infty)$:
    \[
        H_{\alpha} \left( \mathcal{M}(X') \Vert \mathcal{M}(X^*) \right) \leq \rho \alpha,
    \]
    where $H_{\alpha}$ is the $\alpha$-R{\'e}nyi divergence.
\end{definition}

The parameter $\rho$ represents an upper bound on the amount of information $\mathcal{M}$ leaks 
about the underlying data $X$. Larger $\rho$ implies more information leakage, or \emph{privacy loss},
but also allows for the statistics returned by $\mathcal{M}$ to be more accurate. 

zCDP (like other standard notions of DP) has two properties which are very useful for reasoning about how DP guarantees operate
within a full data analysis pipeline.

\begin{lemma}[Composition of zCDP~\citep{BS16}]
    \label{lemma:zcdp_composition}
    Let $\mathcal{M}: \mathcal{X}^n \rightarrow \mathcal{Y}$ and 
    $\mathcal{M}': \mathcal{X}^n \rightarrow \mathcal{Z}$ such that
    $\mathcal{M}$ satisfies $\rho$-zCDP and $\mathcal{M}'$ satisfies $\rho'$-zCDP.
    Define $\mathcal{M}'' : \mathcal{X}^n \rightarrow \mathcal{Y} \times \mathcal{Z}$ by 
    $\mathcal{M}''(x) = \left( \mathcal{M}(x), \mathcal{M}'(x) \right)$.
    Then $\mathcal{M}''$ satisfies $(\rho + \rho')$-zCDP.
\end{lemma}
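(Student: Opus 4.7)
The plan is to fix an arbitrary pair of neighboring datasets $(D, D') \in \mathcal{D}_n$ and an arbitrary order $\alpha \in (1, \infty)$, and then bound $H_\alpha(\mathcal{M}''(D) \Vert \mathcal{M}''(D'))$ directly from Definition~\ref{defn:renyi_divergence}. Because Lemma~\ref{lemma:zcdp_composition} needs the inequality to hold uniformly in $(D, D', \alpha)$, doing a single pointwise bound in $\alpha$ is sufficient.

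The key ingredient I would invoke is that $\mathcal{M}$ and $\mathcal{M}'$ are run with independent internal randomness, so the joint output $\mathcal{M}''(D)$ has law $\mathcal{M}(D) \otimes \mathcal{M}'(D)$, and likewise $\mathcal{M}''(D')$ has law $\mathcal{M}(D') \otimes \mathcal{M}'(D')$. Writing the densities as products $p(y, z) = p_\mathcal{M}(y) \, p_{\mathcal{M}'}(z)$ and $q(y, z) = q_\mathcal{M}(y) \, q_{\mathcal{M}'}(z)$, the integrand in Definition~\ref{defn:renyi_divergence} factors across the two coordinates, and Fubini gives the additivity identity
\[
H_\alpha\bigl(\mathcal{M}(D) \otimes \mathcal{M}'(D) \;\big\Vert\; \mathcal{M}(D') \otimes \mathcal{M}'(D')\bigr)
\;=\; H_\alpha\bigl(\mathcal{M}(D) \Vert \mathcal{M}(D')\bigr) + H_\alpha\bigl(\mathcal{M}'(D) \Vert \mathcal{M}'(D')\bigr).
\]

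With this identity in hand, the zCDP hypotheses on $\mathcal{M}$ and $\mathcal{M}'$ immediately yield $H_\alpha(\mathcal{M}(D) \Vert \mathcal{M}(D')) \leq \rho\alpha$ and $H_\alpha(\mathcal{M}'(D) \Vert \mathcal{M}'(D')) \leq \rho'\alpha$. Summing and comparing with $(\rho + \rho')\alpha$ closes the argument, and since $(D, D', \alpha)$ were arbitrary, $\mathcal{M}''$ satisfies $(\rho + \rho')$-zCDP by Definition~\ref{defn:zcdp}.

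The main obstacle is a subtle modeling issue rather than a computational one: the additivity identity for Rényi divergence requires genuine independence of the randomness in $\mathcal{M}$ and $\mathcal{M}'$. The lemma as stated treats $\mathcal{M}''(x) = (\mathcal{M}(x), \mathcal{M}'(x))$ as a non-adaptive tuple, so I would make the independent-coins assumption explicit at the start. If one later wishes to allow $\mathcal{M}'$ to depend adaptively on the output of $\mathcal{M}$, the same bound still holds but the factorization step must be replaced by a conditional chain-rule argument for Rényi divergence; I would note this as a remark but not carry it out here, since it is not needed for the stated lemma.
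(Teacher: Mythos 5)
The paper does not prove Lemma~\ref{lemma:zcdp_composition}; it imports the result from \citet{BS16} by citation. Your argument is correct for the lemma as stated: with independent internal coins, the joint density factors, the integral in Definition~\ref{defn:renyi_divergence} splits as a product, and the logarithm turns this into additivity of $H_\alpha$, after which the two zCDP hypotheses sum to $(\rho+\rho')\alpha$ uniformly in $(D,D',\alpha)$. This matches the standard proof of non-adaptive composition, and you are right both to flag the independence assumption explicitly and to note that the adaptive variant (which is what \citet{BS16} actually establish, and which the paper implicitly relies on when composing data-dependent subroutines) needs a conditional chain-rule argument for R\'enyi divergence rather than the bare product factorization.
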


We use zCDP because its privacy parameters $\rho$ compose additively, which is convenient for algorithms,
like CoinPress, that contain multiple private releases.
zCDP implies the more familiar notion of $(\epsilon, \delta)$-DP (see Proposition 3 of~\cite{BS16}),
so any zCDP guarantees in this paper can be converted to $(\epsilon,\delta)$-DP if an analyst prefers.

\begin{lemma}[Postprocessing of zCDP~\citep{BS16}]
    \label{lemma:zcdp_postprocessing}
    Let $\mathcal{M}: \mathcal{X}^n \rightarrow \mathcal{Y}$ and 
    $f: \mathcal{Y} \rightarrow \mathcal{Z}$ such that $\mathcal{M}$ 
    satisfies $\rho$-zCDP. Define $\mathcal{M}': \mathcal{X}^n \rightarrow \mathcal{Z}$
    such that $\mathcal{M}'(x) = f \left( \mathcal{M}(x) \right)$.
    Then $\mathcal{M}'$ satisfies $\rho$-zCDP.
\end{lemma}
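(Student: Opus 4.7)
The plan is to reduce the lemma to the data processing inequality (DPI) for R\'enyi divergence, which states that for any measurable (possibly randomized) map $f$ and probability measures $P, Q$ on a common measurable space, $H_\alpha(f_*P \Vert f_*Q) \leq H_\alpha(P \Vert Q)$ for every $\alpha \in (1,\infty)$, where $f_*P$ denotes the pushforward of $P$ under $f$. This is the natural tool because the zCDP condition in Definition~\ref{defn:zcdp} is phrased directly as a bound on $H_\alpha$ of the output distributions on neighboring datasets, and postprocessing is exactly a pushforward under $f$.

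First I would fix an arbitrary neighboring pair $(D, D') \in \mathcal{D}_n$ and an arbitrary $\alpha \in (1, \infty)$, and write $P = \mathcal{M}(D)$, $Q = \mathcal{M}(D')$ for the induced output distributions of $\mathcal{M}$ on $\mathcal{Y}$. Then $\mathcal{M}'(D)$ and $\mathcal{M}'(D')$ are, by construction, the pushforwards $f_*P$ and $f_*Q$ on $\mathcal{Z}$. Applying DPI gives
\[
  H_\alpha\bigl(\mathcal{M}'(D) \Vert \mathcal{M}'(D')\bigr) \;=\; H_\alpha(f_*P \Vert f_*Q) \;\leq\; H_\alpha(P \Vert Q) \;=\; H_\alpha\bigl(\mathcal{M}(D) \Vert \mathcal{M}(D')\bigr).
\]
The right-hand side is at most $\rho \alpha$ by the assumed $\rho$-zCDP guarantee for $\mathcal{M}$, so the same bound holds for $\mathcal{M}'$, and since $(D,D')$ and $\alpha$ were arbitrary, $\mathcal{M}'$ satisfies $\rho$-zCDP.

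The only non-routine step is the DPI itself, which is the main obstacle if one wants a self-contained argument. The standard route is to express the R\'enyi divergence as $\tfrac{1}{\alpha-1}\log \E_Q[(dP/dQ)^\alpha]$ and then apply Jensen's inequality to the conditional expectation of the likelihood ratio given $f$ (viewing a randomized $f$ as a Markov kernel and using the tower property); the function $x \mapsto x^\alpha$ is convex for $\alpha > 1$, which gives the inequality in the correct direction. However, since DPI for R\'enyi divergence is a well-established fact and is already implicit in the citation to~\citet{BS16}, it is sufficient to invoke it as a black box rather than reproduce the argument.
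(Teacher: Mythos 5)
Your proof is correct and follows the standard route: the paper itself gives no proof of this lemma, deferring entirely to the citation of \citet{BS16}, and the argument there is exactly the reduction to the data processing inequality for R\'enyi divergence that you describe. Nothing is missing; invoking DPI as a black box is appropriate here.
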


This postprocessing property of zCDP states that if some output satisfies DP with respect to some data $X$,
functions of that output also satisfy DP with respect to $X$ (provided that the functions do not take $X$ as input).

\subsection{Algorithm Step 1: Bag of Little Bootstraps and Sample \& Aggregate}
\label{subsection:algorithm_step_1}

In Step 1 of our algorithm, the algorithm takes the analyst's non-private estimator 
$\hat{\theta}$ and uses the Bag of Little Bootstraps to try to approximate the sampling 
distribution of $\hat{\theta}$. In particular, the BLB partitions the data into $k$ disjoint subsets,
repeatedly runs the estimator over each subset, and produces $k$ estimates of its mean, $\{\hat{\theta}_i^{BLB}\}_{i \in [k]}$,
and covariance, $\{\hat{\Sigma}_i^{BLB}\}_{i \in [k]}$.

Let $\mathcal{X}$ be our data domain, $\mathcal{D}$ a distribution over the domain, and $X = \{x_1, \hdots, x_n\}$ be our sensitive data set 
where $x_i$ are drawn i.i.d.\ from $\mathcal{D}$. For shorthand, we say that $X \in \mathcal{X}^n$ and $X \sim \mathcal{D}^n$.
We say that the analyst wants to run some model, which has an associated parameter vector $\theta \in \R^d$.
The analyst specifies the estimator they would have liked to run in the non-private setting 
$\hat{\theta}: \mathcal{X}^{n} \rightarrow \R^d$. Our goal is eventually to approximate $\hat{\theta}$ in a manner that 
satisfies DP with respect to $X$.
We assume that $n$ is ``public knowledge'' and does not need to be privately estimated.
If this is not true, we can generate a DP estimate of $n$, call it $\tilde{n}$, and then $X$ 
could be subsampled or augmented with rows of synthetic data 
until it has $\tilde{n}$ rows, creating a new data set $\tilde{X}$ over which we can apply our algorithm. 

As stated earlier, DP algorithms typically require specification of the global sensitivity of the 
function whose outputs are being privatized (see Definition~\ref*{appendix:defn:global_sensitivity} and 
Lemma~\ref*{appendix:lemma:gaussian_mechanism}).
This can become arbitrarily complex for complicated models, even after assuming a bounded input domain. 
The Sample \& Aggregate framework introduced in~\citet{NRS07} provides a strategy for estimating such functions
without specifying the global sensitivity.
First run the function of interest non-privately over $k$ disjoint subsets of the data, bound 
the outputs, and then aggregate the results using a function with a sensitivity that is 
easier to reason about (for our purposes, we assume the aggregation function is the mean). 
We then privately estimate the mean of these $k$ results.
Because each element in the original data 
contributes to one subset of the partition, its effect on the aggregation is localized 
to one of its $k$ inputs, and so a mean estimation algorithm that satisfies DP with respect to those $k$ elements 
also satisfies DP with respect to the underlying data. 
A more thorough treatment of this framework can be found in~\citet{NRS07} and Chapter 7 of~\citet{DR14}.

The Bag of Little Bootstraps algorithm (developed in~\citep{KTSJ14} and reproduced in Section~\ref*{appendix:bag_of_little_bootstraps})
randomly partitions the data $X$ into $k$ disjoint subsets 
$\{X_1, \hdots, X_k\}$, 
scales the subset back up to an effective sample size that matches that of the original data (via multinomial sampling), 
and runs $\hat{\theta}$
$r$ times, producing estimates $\{\hat{\theta}_{i,a}\}_{a \in [r]}$. It then aggregates these 
into an arbitrary assessment of estimator quality. We use the mean and covariance,
so for each $i \in [k]$ we get 
$
    \hat{\theta}^{BLB}_i = \frac{1}{r} \sum_{a=1}^{r} \hat{\theta}_{i,a} \text{ and }
    \hat{\Sigma}^{BLB}_i = \text{Cov} \left( \{\hat{\theta}_{i,a}\}_{a \in [r]} \right).
$
Our approach allows us to find a single confidence region for the entire parameter vector jointly. However, 
we acknowledge that many analysts will prefer separate confidence intervals for each element in their parameter 
vector. This preference is advantageous from a privacy perspective, as confidence intervals will require less noise to privatize than 
a full confidence region and will thus be more accurate.

We present methods and results for the general case of finding a confidence region, but also show how this could be adapted to instead find confidence intervals. We include the setup for finding confidence intervals in the main text with results in the supplement.

These sets of estimates are now empirical approximations to the theoretical distributions of the BLB estimates,
which we assume are themselves good approximations of the actual parameters of interest.
We make this last assumption explicit as follows.

\begin{assumption}
	\label{assump:blb_approx}
    Let the estimator $\hat{\theta} \sim G(\theta, \Sigma)$ where the marginals of $G$ each belong 
    to a location-scale family. 
    For $\hat{\theta}^{BLB}_i$ and $\hat{\Sigma}^{BLB}_i$ generated by applying BLB to our estimator 
    $\hat{\theta}$, let $\hat{\theta}^{BLB} = \frac{1}{k} \sum_{i=1}^{k} \hat{\theta}_i^{BLB}$ and 
    $\hat{\Sigma}^{BLB} = \frac{1}{k} \sum_{i=1}^{k} \hat{\Sigma}_i^{BLB}$.
    We assume that 
	\begin{align*}
        \E \left( \hat{\theta}^{BLB} \right) = \E \left( \hat{\theta} \right) &= \theta \\
        \Pr \left( \hat{\Sigma}^{BLB} \succeq \hat{\Sigma} \right) &= 1.
    \end{align*}
\end{assumption}

We take $\succeq$ to mean ``greater/equal in L{\"o}wner order''. In particular,
$A \succeq B \iff A-B$ is a PSD matrix, in which case we call $A$ a L{\"o}wner upper bound on $B$.

\subsection{Algorithm Step 2: Generating Private Parameter Estimates With CoinPress}
\label{subsection:algorithm_step_2}

Now that we have BLB estimates $\{ \hat{\theta}^{BLB}_i \}_{i \in [k]}$ 
and $\{ \hat{\Sigma}^{BLB}_i \}_{i \in [k]}$, we can privately estimate their empirical means
$\hat{\theta}^{BLB}$ and $\hat{\Sigma}^{BLB}$ using the CoinPress mean estimation algorithm (Section~\ref*{appendix:subsection:modified_coinpress_algorithm}).
We try to provide intuition here for how and why CoinPress works, but interested readers should consult~\citet{BDKU20} for a 
more complete treatment.

For generality, we say that we have 
data $\{ y_i \}_{i \in [k]}$ with empirical mean $\hat{\mu}$, where each $y_i$ is an i.i.d.\ instantiation of a random variable $Y$ 
with mean $\mu_Y$ and covariance $\Sigma_Y$. 
Our goal is to estimate $\mu_Y$ in a manner that satisfies DP.
In order to state these results generally, we will assume that $Y \in \R^{d'}$ for some $d'$.
When $y_i$ stands in for $\hat{\theta}^{BLB}_i$, we have $d' = d$, the dimension of our parameter vector of interest.
The same is true when $y_i$ stands in for $\hat{\Sigma}^{BLB}_i$ and we care only about univariate confidence intervals
such that $\hat{\Sigma}^{BLB}_i = \hat{V}^{BLB}_i I_d$ for some $\hat{V}^{BLB}_i$. When we care about a joint confidence region, we need to use the entire 
upper triangular of $\hat{\Sigma}^{BLB}_i$, so $d' = \frac{d(d+1)}{2}$. 
In the actual implementation (and proofs), these covariance estimates are flattened into a vector of the appropriate 
dimension before estimation and unflattened at the end of estimation
to produce full covariance matrices again. We assume that this is going on behind the scenes and,
for notational clarity, keep calling them $\hat{\Sigma}^{BLB}_i$.

The high-level idea behind CoinPress is 
to make a series of mean estimates, where each estimate (probabilistically) improves upon the last,
making only a few requirements of the analyst. 

\begin{assumption}
    \label{assump:user_heavier_tailed}
    The analyst provides a location-scale family of distributions 
	$Q_Y(\mu, \Sigma)$
	with heavier tails 
    than the distribution of $Y$ as described in Definition~\ref*{appendix:defn:heavier_tails}. 
\end{assumption}

\begin{assumption}
    \label{assump:initial_parameter_bounds}
    The analyst provides 
    $\tilde{\mu}_0 \in \R^d$ and $r_0 \in \R$ such that 
    $\mu_Y \in B_2(\tilde{\mu}_0,r_0)$, the 
    $\ell_2$ ball centered at $\tilde{\mu}_0$ with radius $r_0$. 
\end{assumption}

\begin{assumption}
    \label{assump:initial_covariance_bounds}
    The analyst provides $\Sigma^U_{Y} \in \R^{d \times d}$ such that 
    $\Sigma_Y \preceq \Sigma^U_{Y}$.
\end{assumption}

These three assumptions provide the backbone of the iterative
improvement in CoinPress.
Recall that the noise added to privatize an estimator defined over an input domain generally scales with the size of the domain.
Over a series of $t$ steps, CoinPress attempts to find a small domain that, with high probability, contains all the
$\{ y_i \}_{i \in [k]}$. 
Assumption~\ref{assump:initial_parameter_bounds}
ensures that the algorithm starts with sufficiently conservative bounds that
contain $\mu_X$.
Assumptions~\ref{assump:user_heavier_tailed} and~\ref{assump:initial_covariance_bounds}
then allow the algorithm to convert bounds on $\mu_X$ to high-probability bounds
on the individual data points $y_i$.
For the remainder of the section, we assume that our three assumptions hold.

At step $m \in [t]$ of the algorithm, CoinPress takes the $\ell_2$ ball and expands it outward based on the 
assumed distribution. If you have a distribution of a known family with bounded mean and covariance, you can, with high 
probability, upper bound the $\ell_2$ norm of an arbitrary number of draws from said distribution.
In the context of our case, it specifies a ball that, with high probability, contains all of the $\{ y_i \}_{i \in [k]}$. 
In this case, using the ball as our data domain and projecting our data 
into this domain will not clip any of the data.
The global sensitivity of the mean is calculated using this ball and CoinPress adds noise scaled 
to the sensitivity using the Gaussian mechanism (Lemma~\ref*{appendix:lemma:gaussian_mechanism}), which adds zero-mean noise from a 
multivariate Gaussian with diagonal covariance to get a private estimate $\tilde{\mu}_m$. 
Because we are, with high probability, not clipping any of the 
$\{ y_i \}_{i \in [k]}$, our use of the Gaussian mechanism implies that the form of our 
private estimator is ``empirical mean + zero-mean noise''. 
Thus, based on the scale of the noise, we can produce a new $\ell_2$ ball which, with high probability, contains the true empirical mean.
This ball becomes the bounding set for the mean at the next step and the process continues.
The guarantee that CoinPress does not clip any points ensures (with high probability) 
that for all $m \in [t]$, the $\tilde{\mu}_m$ are unbiased estimates of the empirical mean of the $y_i$.

Moreover, assuming no clipping, the variances (induced by clipping and our use of the Gaussian mechanism) of $\tilde{\mu}_m$ are
just the diagonal of the covariance parameter used in the Gaussian mechanism,
which we call $\vec{\sigma}^2_{\tilde{\mu}, m} \in \R^{d'}_{+}$. 
In other words, under no clipping, the additional error incurred by the general privacy mechanism is 
simply the error from the Gaussian mechanism.

In general, applying the CoinPress algorithm to a data set $\{ y_i \}_{i \in [k]}$
satisfies $\rho$-zCDP with respect to $\{y_i\}_{i \in [k]}$~\citep{BDKU20}.
In our case $\{y_i\}_{i \in [k]}$ stands in for either of the sets of BLB estimates 
$\{\hat{\theta}^{BLB}_{i}\}_{i \in [k]}$ and $\{\hat{\Sigma}^{BLB}_i\}_{i \in [k]}$.
So, CoinPress satisfies zCDP with respect to each set of BLB estimates and, by the extension implied by 
our use of the Sample \& Aggregate framework, also satisfies zCDP with respect to the original data $X$.
Moreover, CoinPress comes with a high-probability guarantee on the form of the private estimates it produces.

\begin{theorem}
	\label{theorem:mean_est_form}
    Under Assumptions~\ref{assump:user_heavier_tailed}, \ref{assump:initial_parameter_bounds}, and \ref{assump:initial_covariance_bounds},
	CoinPress produces $t$ mean estimates $\{ \tilde{\mu}_m \}_{m \in [t]}$ and associated privacy variances 
    $\{ \vec{\sigma}^2_{\tilde{\mu},m} \}_{m \in [t]}$ 
    such that 
    \[ \Pr \left[ \forall m \in [t]: \tilde{\mu}_m \sim N \left( \hat{\mu}, \vec{\sigma}^2_{\tilde{\mu},m} I_{d'} \right) \right] \geq 1 - \beta^{\tilde{\mu}}. \]
\end{theorem}

We now recall that $\hat{\mu}$ is the empirical mean of the $\{y_i\}_{i \in [k]}$, where 
$y_i$ stands in for either $\hat{\theta}^{BLB}_i$ or $\hat{\Sigma}^{BLB}_i$. 
That is, we use CoinPress to privately estimate the mean of the distributions induced by the BLB. 
This yields $t$ 
estimates of both the parameter means and covariances (both with associated privacy variances) 
$\{ \tilde{\theta}_m, \vec{\sigma}^2_{\tilde{\theta}, m} \}_{m \in [t]}$ and
$\{ \tilde{\Sigma}_m, \vec{\sigma}^2_{\tilde{\Sigma}, m} \}_{m \in [t]}$.

\subsection{Algorithm Step 3: Get Final Parameter Estimates and Confidence Intervals Via Postprocessing}
\label{subsection:algorithm_step_3}

We now use our $t$ DP mean and covariance estimates $\{ \tilde{\theta}_m, \tilde{\Sigma}_m \}_{m \in [t]}$
to get a single DP estimate of the mean and covariance, which we'll call $\tilde{\theta}$ and $\tilde{\Sigma}$.

\paragraph{Final Parameter Estimates}
\label{subsubsection:final_parameter_estimates}

We start by presenting a multivariate version of the precision weighting argument
which gives the minimal covariance way to combine a 
set of unbiased estimators. 

\begin{theorem}
	\label{theorem:precision_weighting}
	For a parameter $\tau$, say we are given a series of independent estimates $\{ \hat{\tau}_m \}_{m \in [t]}$
	such that $\E \left( \hat{\tau}_m \right) = \tau$ and $\text{Cov} \left( \hat{\tau}_m \right) = S_m$
	for some positive definite $S_m$.
	Then the minimum covariance unbiased linear weighting 
	of the $\{ \hat{\tau}_m \}_{m \in [t]}$ is given by 
	\[
    \hat{\tau} = \left( \Sigma_{m=1}^{t} S_m^{-1} \right)^{-1} \left( \Sigma_{m = 1}^{t} S_m^{-1} \hat{\tau}_m \right)
    ,\]
	which has 
	$\E \left( \hat{\tau} \right) = \tau$
	and 
	$\text{Cov} \left( \hat{\tau} \right) = \left( \Sigma_{m=1}^{t} S_m^{-1} \right)^{-1}$.
\end{theorem}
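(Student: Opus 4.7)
The plan is to parametrize an arbitrary linear combination of the $\hat{\tau}_m$ by matrix weights, derive the necessary form from the unbiasedness constraint, and then show optimality by a standard ``optimal weighting plus orthogonal deviation'' decomposition that generalizes the scalar inverse-variance argument. Throughout I will use that each $S_m$ (and hence $S_m^{-1}$, and hence $\left(\sum_j S_j^{-1}\right)^{-1}$) is symmetric PD, which lets me freely drop transposes on these objects.

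First, I consider a general linear weighting $\hat{\tau}' = \sum_{m=1}^{t} A_m \hat{\tau}_m$ for matrices $A_m \in \R^{d \times d}$. Unbiasedness of $\hat{\tau}'$ for every $\tau$ forces $\sum_m A_m = I$. Using independence of the $\hat{\tau}_m$, its covariance is $\mathrm{Cov}(\hat{\tau}') = \sum_m A_m S_m A_m^{\transpose{}}$. Plugging in the proposed weights $W_m = \left(\sum_j S_j^{-1}\right)^{-1} S_m^{-1}$, the unbiasedness constraint is immediate since $\sum_m W_m = \left(\sum_j S_j^{-1}\right)^{-1} \sum_m S_m^{-1} = I$, and a direct calculation (using symmetry of $\left(\sum_j S_j^{-1}\right)^{-1}$ and of $S_m^{-1}$) yields $\mathrm{Cov}(\hat{\tau}) = \sum_m W_m S_m W_m^{\transpose{}} = \left(\sum_j S_j^{-1}\right)^{-1} \left(\sum_m S_m^{-1}\right) \left(\sum_j S_j^{-1}\right)^{-1} = \left(\sum_j S_j^{-1}\right)^{-1}$, matching the claim.

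For optimality, I decompose an arbitrary unbiased weighting as $A_m = W_m + D_m$, where $\sum_m D_m = 0$ since both $\sum_m A_m = I$ and $\sum_m W_m = I$. Expanding,
\[
\mathrm{Cov}(\hat{\tau}') \;=\; \sum_m W_m S_m W_m^{\transpose{}} \;+\; \sum_m W_m S_m D_m^{\transpose{}} \;+\; \sum_m D_m S_m W_m^{\transpose{}} \;+\; \sum_m D_m S_m D_m^{\transpose{}}.
\]
The two cross terms vanish: for example, $\sum_m W_m S_m D_m^{\transpose{}} = \left(\sum_j S_j^{-1}\right)^{-1} \sum_m S_m^{-1} S_m D_m^{\transpose{}} = \left(\sum_j S_j^{-1}\right)^{-1} \left(\sum_m D_m\right)^{\transpose{}} = 0$, and the other is its transpose. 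Hence $\mathrm{Cov}(\hat{\tau}') = \mathrm{Cov}(\hat{\tau}) + \sum_m D_m S_m D_m^{\transpose{}}$, and the residual is a sum of PSD matrices, giving $\mathrm{Cov}(\hat{\tau}') \succeq \mathrm{Cov}(\hat{\tau})$ in the L\"owner order.

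The only genuinely delicate point is the matrix-weight version of the cross-term cancellation: the scalar meta-analysis argument relies on symmetry of variances and commutativity, neither of which is automatic here. The trick above works because $W_m$ factors as a common left factor $\left(\sum_j S_j^{-1}\right)^{-1}$ times $S_m^{-1}$, so $W_m S_m = \left(\sum_j S_j^{-1}\right)^{-1}$ is the \emph{same} matrix for every $m$, which is exactly what allows $\sum_m D_m = 0$ to eliminate the cross terms. I expect this step to be the main (and really only) obstacle; once the decomposition is set up, the rest is routine.
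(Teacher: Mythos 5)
Your proof is correct, but it takes a genuinely different route from the paper's. You use the classical Gauss--Markov / BLUE decomposition: write an arbitrary unbiased weighting as $A_m = W_m + D_m$ with $\sum_m D_m = 0$, observe that $W_m S_m = \bigl(\sum_j S_j^{-1}\bigr)^{-1}$ is the \emph{same} matrix for every $m$ so the cross terms collapse to $\bigl(\sum_j S_j^{-1}\bigr)^{-1}\bigl(\sum_m D_m\bigr)^T = 0$, and conclude $Cov(\hat{\tau}') = Cov(\hat{\tau}) + \sum_m D_m S_m D_m^T \succeq Cov(\hat{\tau})$. The paper instead fixes a direction $v$, sets up a Lagrangian for minimizing $v^T Cov\bigl(\sum_m A_m \hat{\tau}_m\bigr) v$ subject to $\sum_m A_m = I_d$, differentiates with respect to $A_m$, solves for the stationary point $A_m = \lambda S_m^{-1}$, and then checks a second-order condition via a Kronecker-product positivity claim. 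Your argument buys quite a lot here: it is more elementary (no matrix calculus), it establishes the L\"owner-order domination against \emph{all} unbiased linear weightings in one stroke rather than direction-by-direction, and it sidesteps several fragile steps in the paper's calculus route --- in particular the paper cancels $vv^T$ by multiplying by $(vv^T)^{-1}$, which does not exist since $vv^T$ has rank one, and asserts that $vv^T$ is positive definite, which is false for $d > 1$. The one point worth making explicit in your write-up is that unbiasedness forces $\sum_m A_m = I$ only if one requires the weighting to be unbiased for every value of $\tau$ (the standard convention, and the restriction the paper also imposes); with that caveat stated, your proof is complete and, if anything, more rigorous than the paper's.
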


Specifically, by ``minimum covariance'' we mean that any other unbiased linear weighting $\hat{\tau}'$ of the $\{ \hat{\tau}_m \}_{m \in [t]}$ will have 
$\text{Cov} \left( \hat{\tau} \right) \preceq \text{Cov} \left( \hat{\tau}' \right)$. 

Recall from Theorem~\ref{theorem:mean_est_form} that, with high probability, both our $\{\tilde{\theta}_m\}_{m \in [t]}$ and $\{\tilde{\Sigma}_m\}_{m \in [t]}$ 
are sets of estimates which are independent, unbiased, and have a known covariance structure. 
Thus, they both meet the criteria to be combined into precision-weighted estimators in the style of 
Theorem~\ref{theorem:precision_weighting}.
Because the precision weighting step is a function of the DP estimates,
it retains the privacy guarantees from Step 2 by the postprocessing property of zCDP (Lemma~\ref{lemma:zcdp_postprocessing}). 

We start by using the precision-weighting idea to find an estimator for the covariances, $\tilde{\Sigma}$.
Unlike the mean estimation setting, where our goal is to produce an unbiased private estimator,
our goal for our private covariance estimator is to find a private estimator that will
reliably overestimate the empirical covariance, and thus yield valid confidence intervals. 
To get a sense for why this is necessary, consider the one-dimensional case where we have a sample variance $s$
and a privatized version $\tilde{s}$.
In the non-private setting, we simply use the estimate $s$ to calculate
confidence intervals, but in order to use $\tilde{s}$ for confidence intervals we need to understand the 
relationship between $s$ and $\tilde{s}$. An important first step is to ensure that there is no 
clipping in the construction of $\tilde{s}$, so we know that $\tilde{s}$ equals $s$ plus zero-mean noise; this is the 
same motivation as in the mean estimation case. 
An extra complication for variances is that zero-mean noise addition has an asymmetric impact on confidence interval 
coverage; if we were to use $\tilde{s}$ for confidence intervals, we would underestimate the true variance 
$50\%$ of the time.
We can avoid this problem by increasing $\tilde{s}$ to the
point that we know, with high probability, that it is at least as large as $s$.  

\begin{theorem}
	\label{theorem:tilde_Sigma}
	Given covariance estimates and privacy variances $\{ \tilde{\Sigma}_m, \vec{\sigma}^2_{\Sigma, m} \}_{m \in [t]}$,
    let $\tilde{S}_m \in \R^{d'}$ be the flattened version of $\tilde{\Sigma}_m$.
    We can construct a precision-weighted estimator $\tilde{S}$:
	\[
		\tilde{S} := \frac{ \sum_{m=1}^{t} \tilde{S}_m / \vec{\sigma}_{\Sigma,m}^2 }{ \sum_{m=1}^{t} 1 / \vec{\sigma}_{\Sigma,m}^2 }.
    \]

    Let $\tilde{\Sigma}'$ be the unflattened $d \times d$ version of $\tilde{S}$ and $b$ be the unflattened $d \times d$
    matrix where $b^2_{ij} = \text{Var} \left( \tilde{\Sigma}'_{ij} \right)$ (i.e.\ the diagonal values of the covariance matrix of the flattened 
    precision-weighted estimator). For $\beta^{ub} \in (0,1)$, define
    \[
        c = \min_{\epsilon \in (0, 1/2]} (1+\epsilon) \left( 2 \max_{i \in [d]} \norm{b_{\cdot, j}}_2 + 
                                    \frac{6 \sqrt{\log d}}{\log(1+\epsilon)} \max_{i,j \in [d] \times [d]} |b_{ij}| \right) + 
                                    \sqrt{ \frac{\ln(1 / \beta^{ub})}{ 4 \max_{ij} b_{ij}^2 } }.
    \]

	Then, for 
    $\tilde{\Sigma} = \tilde{\Sigma}' + c I_{d}$
	we have
	$
        \Pr \left( \hat{\Sigma} \preceq \tilde{\Sigma} \right) \geq 1 - \beta^{\tilde{\Sigma}} - \beta^{ub}. 
    $
\end{theorem}

We use a similar strategy to convert our private parameter estimates $\{ \tilde{\theta}_m \}_{m \in [t]}$ into a
final private parameter estimate $\tilde{\theta}$. Because we want an unbiased estimator of $\hat{\theta}$,
we don't need to be conservative like we did with the covariance; we simply use the precision-weighted estimator.

\begin{theorem}
    \label{thm:tilde_theta}
    Given parameter estimates and privacy variances $\{ \tilde{\theta}_m, \vec{\sigma}^2_{\theta, m} \}_{m \in [t]}$,
    we define the precision-weighted estimator $\tilde{\theta}$ as
    \[
		\tilde{\theta} := \frac{ \sum_{m=1}^{t} \tilde{\theta}_m / \vec{\sigma}_{\theta,m}^2 }{ \sum_{m=1}^{t} 1 / \vec{\sigma}_{\theta,m}^2 }.
    \]
    This estimator has expectation $\hat{\theta}$ and covariance $\Sigma_{\tilde{\theta}} = \frac{1}{ \sum_{m=1}^{t} 1/\vec{\sigma}_{\theta,m}^2 }I_d$.
    In particular, we say that 
    \[
        \Pr \left( \tilde{\theta} \sim N \left( \hat{\theta}, \Sigma_{\tilde{\theta}} \right) \right) \geq 1 - \beta^{\tilde{\Sigma}} - \beta^{ub} - \beta^{\tilde{\theta}}.
    \]
\end{theorem}

\paragraph{Confidence Region}
\label{subsubsection:confidence_intervals}
We now have private estimates $\tilde{\theta}$ and $\tilde{\Sigma}$
such that $\tilde{\theta} \sim N\left( \hat{\theta}, \Sigma_{\tilde{\theta}} \right)$ and 
$\hat{\Sigma} \preceq \tilde{\Sigma}$ with probability $1 - \beta^{\tilde{\Sigma}} - \beta^{ub} - \beta^{\tilde{\theta}}$.
Going back to Assumption~\ref{assump:user_heavier_tailed}, we also have a distribution $Q_{\hat{\theta}}$ 
we assume to be heavier tailed than that of $\hat{\theta}$. 
We can represent our approximation of the sampling distribution of our estimator as the compound distribution
$
	Q_{\hat{\theta}} \left( \hat{\theta} + N \left(0, \Sigma_{\tilde{\theta}} \right), \tilde{\Sigma} \right).
$

\begin{theorem}[Confidence Region (valid with high probability)]
    \label{theorem:general_cr_high_prob}
    Let $Z$ be a $d$-dimensional random variable such that 
    $Z \sim Q_{\hat{\theta}} \left( \hat{\theta} + N \left(0, \Sigma_{\tilde{\theta}} \right), \tilde{\Sigma} \right)$.
    Suppose $C$ is a $d$-dimensional ellipsoid such that 
    $\Pr( Z \in C ) \geq 1 - \alpha$ for some $\alpha \in (0,1)$.
    Then, with probability $1 - \beta^{\tilde{\Sigma}} - \beta^{ub} - \beta^{\tilde{\theta}}$:
    \[ \Pr \left( \theta \in C \right) \geq 1 - \alpha. \]
\end{theorem}

It is always trivial to find such an ellipsoid $C$ (e.g.\ take $C = \R^d$), but finding one 
with coverage close to $1-\alpha$ analytically could be difficult in general. 
However, in typical scenarios it is likely to be much more nicely behaved. 
Most notable is the case where $Q_{\hat{\theta}}$ is multivariate Gaussian (e.g.\ the analyst is comfortable assuming that the 
CLT has kicked in for the BLB estimates).
The resulting compound distribution is 
$N \left(\tilde{\theta}, \tilde{\Sigma} + \Sigma_{\tilde{\theta}} \right)$; a Gaussian random variable with a 
Gaussian random variable as its location parameter is still Gaussian. 
This becomes even simpler if the analyst is interested in univariate confidence intervals, in which case they can use the fact that
$\forall j \in [d]: \tilde{\theta}_j \sim N \left( \hat{\theta}_j, \tilde{\Sigma}_{j,j} + \Sigma_{\tilde{\theta}_{j,j}} \right)$,
and can calculate confidence intervals directly from the CDF of the univariate Gaussian.
In more complicated scenarios, an analyst can always get approximate quantiles using Monte Carlo methods.

\subsection{Full Algorithm Statement}
In Algorithm~\ref{alg:gvdp}, we finally present our algorithm in whole. 
We omit some hyperparameters in the subroutines to make it easier to focus on the core pieces that change between them.
Let $\xi: a_{1:k} \mapsto \frac{1}{k} \sum_{i=1}^{k} a_i$ and either
$\xi': a_{1:k} \mapsto \text{Cov} \left( \{a_i\}_{i \in [k]} \right)$
or 
$\xi': a_{1:k} \mapsto \diag \left( \text{Cov} \left( \{a_i\}_{i \in [k]} \right) \right)$,
depending on whether the analyst desires a joint confidence region or separate confidence intervals.

\begin{algorithm}[h!]
    \scriptsize
    \caption{General Valid DP (GVDP)}
    \label{alg:gvdp}
    \hspace*{\algorithmicindent} \textbf{Input:} data set $X \in \R^{n \times m}$, 
                                                 estimator $\hat{\theta}: \mathcal{X}^n \rightarrow \R^d$ 
                                                 families of distributions $Q_{\hat{\theta}}, Q_{\hat{\Sigma}}$,
                                                 privacy budgets $\rho^{\tilde{\theta}}, \rho^{\tilde{\Sigma}} > 0$,
                                                 failure probabilities $\beta^{\tilde{\theta}}, \beta^{\tilde{\Sigma}}, \beta^{ub} \in (0,1)$ \\ 
    \hspace*{\algorithmicindent} \textbf{Output:} parameter estimate $\tilde{\theta}$
                                                  and associated confidence intervals/region $C$
                                                  which satisfy $\left( \rho^{\tilde{\theta}} + \rho^{\tilde{\Sigma}} \right)$-zCDP
                                                  and have desired unbiased/validity properties with probability $1 - \beta^{\tilde{\theta}} - \beta^{\tilde{\Sigma}} - \beta^{ub}$
    \begin{algorithmic}[1] 
        \Procedure{\gvdp}{$X, \hat{\theta}, Q_{\hat{\theta}}, Q_{\hat{\Sigma}}, \rho^{\tilde{\theta}}, \rho^{\tilde{\Sigma}}$}
            \State $\{ \hat{\Sigma}^{BLB}_i \}_{i \in [k]} = \text{\blb} \left( X, \hat{\theta}, \xi', \hdots \right)$ \Comment{Algorithm~\ref*{appendix:alg:blb} -- get BLB estimates of parameter covariance} \label{ln:gvdp:blb_cov}
            \State $\{ \hat{\theta}^{BLB}_i \}_{i \in [k]} = \text{\blb} \left( X, \hat{\theta}, \xi, \hdots \right)$ \Comment{Algorithm~\ref*{appendix:alg:blb} -- get BLB estimates of parameter means} \label{ln:gvdp:blb_mean}
            \State $\{\tilde{\Sigma}_m\}_{m \in [t]} = \text{\mvmeaniter} \left( \{ \hat{\Sigma}^{BLB}_i \}_{i \in [k]}, \hdots, Q_{\hat{\Sigma}}, \hdots, \rho^{\tilde{\Sigma}}, \beta^{\tilde{\Sigma}} \right) $ \Comment{Algorithm~\ref*{appendix:alg:hd_mean_est} -- privately estimate parameter covariance at $\rho^{\tilde{\Sigma}}$-zCDP level} \label{ln:gvdp:cp_cov}
            \State Combine $\{\tilde{\Sigma}_m\}_{m \in [t]}$ via precision-weighting to get $\tilde{\Sigma}$ \Comment{Theorem~\ref{theorem:precision_weighting}} \label{ln:gvdp:prec_weight_cov}
            \State $\{\tilde{\theta}_m\}_{m \in [t]} = \text{\mvmeaniter} \left( \{ \hat{\theta}^{BLB}_i \}_{i \in [k]}, \hdots, Q_{\hat{\theta}}, \hdots, \rho^{\tilde{\theta}}, \beta^{\tilde{\theta}} \right) $ \Comment{Algorithm~\ref*{appendix:alg:hd_mean_est} -- privately estimate parameter means at $\rho^{\tilde{\theta}}$-zCDP level} \label{ln:gvdp:cp_mean}
            \State Combine $\{\tilde{\theta}_m\}_{m \in [t]}$ via precision-weighting to get $\tilde{\theta}$ \Comment{Theorem~\ref{theorem:precision_weighting}} \label{ln:gvdp:prec_weight_mean}
            \State Use $\tilde{\theta}, \tilde{\Sigma}, Q_{\hat{\theta}}, \text{ and } \beta^{ub}$ to get 
                   confidence intervals/region $C$. \Comment{Theorem~\ref{theorem:general_cr_high_prob}} \label{ln:gvdp:get_cis}
            \State \Return $\{ \tilde{\theta}, C \}$
        \EndProcedure
    \end{algorithmic}
\end{algorithm}

In summary, our private estimates $\tilde{\theta}$ and $\tilde{\Sigma}$ have statistical guarantees relative to the true parameters 
of the sampling distribution of $\hat{\theta} \sim G(\theta, \Sigma)$ via the following lines of reasoning:
\[
    \E \left( \tilde{\theta} \right)
    \underset{ \text{CoinPress + postprocessing} }{ \stackrel{w.h.p.}{=} }
    \E \left( \hat{\theta}^{BLB} \right)
    \underset{ \text{Assumption~\ref{assump:blb_approx}} }{=}
    \E \left( \hat{\theta} \right)
    = \theta
\]
\[
    \tilde{\Sigma}
    \underset{ \text{CoinPress + postprocessing} }{ \stackrel{w.h.p.}{\succeq} }
    \hat{\Sigma}^{BLB}
    \underset{ \text{Assumption~\ref{assump:blb_approx}} }{ \succeq }
    \Sigma
\]
\section{Empirical Evaluation}
\label{section:empirical_evaluation}

We provide empirical demonstrations of our core result, showing that we can produce unbiased parameter estimates 
and valid confidence intervals when the requisite assumptions hold. For every evaluation, we aim to get 
valid confidence intervals for each element of the parameter vector rather than a single valid confidence region, as we expect this to be the dominant use case in practice. 

All results 
satisfy zCDP at the $\rho = 0.1$ level and, inside the GVDP algorithm, we always run CoinPress for $t=5$ iterations.
Additionally, we assume that the analyst chooses bounds that satisfy 
Assumptions~\ref{assump:initial_parameter_bounds} and \ref{assump:initial_covariance_bounds},
but are larger than the tightest possible bounds by a factor of $\approx 100$.
For example, if the analyst had an estimator with a $N(\mu = 1, \sigma^2 = 1)$ sampling distribution
we assume their prior knowledge to be that $\mu \in [-100, 100]$ and $\sigma^2 \leq 100$.  
Additional empirical results, including comparisons to existing methods, are described in Section~\ref*{appendix:empirical_results}.

\paragraph{OLS Regression Demonstration}
\label{section:ols_regression_demo}

We begin by testing parameter estimation for a properly specified OLS model with $d=5$ parameters of interest.
In a single iteration of our experiment, we generate data from a linear model $y = X \beta + \epsilon$ 
with Gaussian covariates, Gaussian error, 
and correlation structure such that the effective rank of the resulting data is $\approx d - 1$.
We increase the underlying noise in the data as $n$ increases such that the non-private confidence intervals 
are essentially constant across values of $n$. This allows us to better demonstrate the effects of changes in $k$ and $n$ on 
our algorithm's performance. 
We then privately estimate the values of the $d$ coefficients and their associated 
standard errors. We run this entire experiment 100 times. We assume the sampling distribution of the coefficients is multivariate Gaussian
and imagine that the user sets all upper bounds $\approx 100$ times larger than the 
tightest possible upper bounds. 
We present these results in Figure~\ref{fig:ols_regression}.

\begin{figure}[h!]
    \captionsetup{font=footnotesize,labelfont=footnotesize}
    \begin{subfigure}[t]{.49\textwidth}
        \centering
        \includegraphics[width=\textwidth,height=4cm]{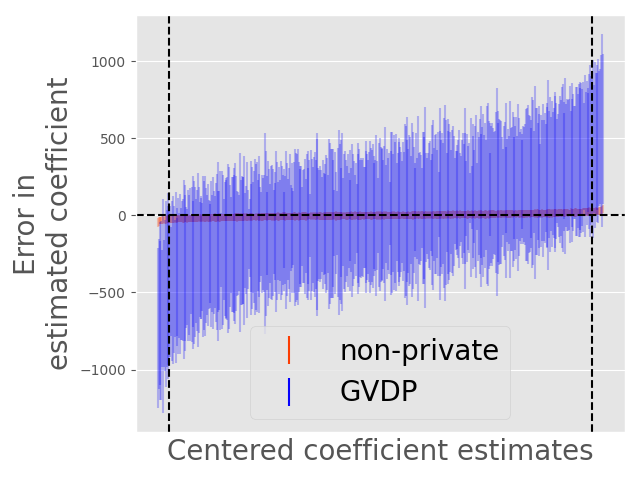}
        \caption{$n = 10{,}000, k = 250$}
    \end{subfigure}
    \begin{subfigure}[t]{.49\textwidth}
        \centering
        \includegraphics[width=\textwidth,height=4cm]{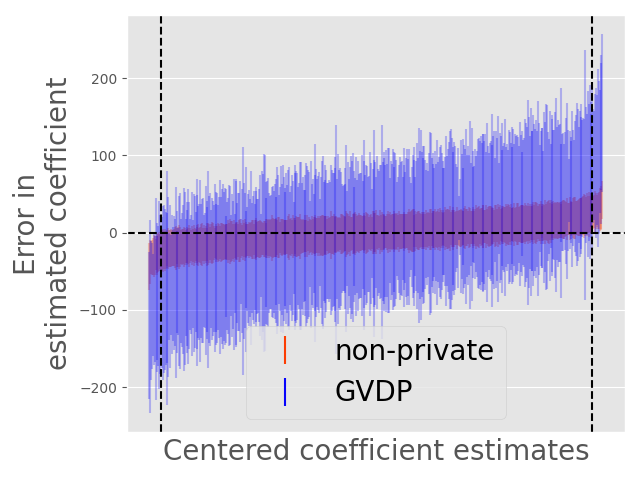}
        \caption{$n = 10{,}000, k = 1{,}000$}
    \end{subfigure}
    \begin{subfigure}[t]{.49\textwidth}
        \centering
        \includegraphics[width=\textwidth,height=4cm]{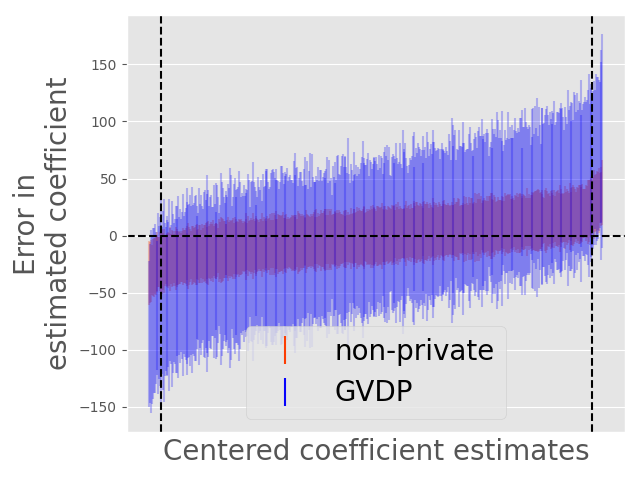}
        \caption{$n = 50{,}000, k = 1{,}000$}
    \end{subfigure}
    \begin{subfigure}[t]{.49\textwidth}
        \centering
        \includegraphics[width=\textwidth,height=4cm]{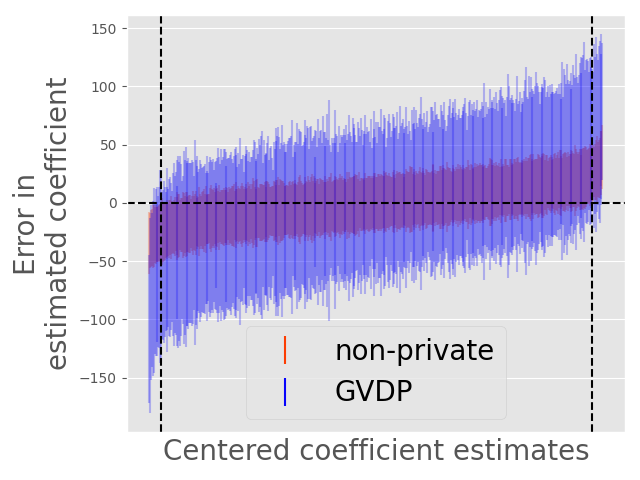}
        \caption{$n = 100{,}000, k = 1{,}000$}
    \end{subfigure}
    \caption{OLS: Distribution of coefficient estimates and 95\% confidence intervals.}
    \label{fig:ols_regression}
\end{figure}

Each plot consists of coefficient estimates centered around their true values and presented in increasing order, 
with vertical bars representing the 95\% confidence interval for that estimate. 
We expect properly calibrated confidence intervals to cross the x-axis at the vertical dotted black lines, placed 
at the $2.5^{th}$ and $97.5^{th}$ quantiles, which is the behavior we observe 
in each plot. 

Additionally, Figure~\ref{fig:ols_regression} demonstrates the principle that the noise due to privacy in our algorithm scales with 
$k$ rather than $n$. 
The private confidence intervals are significantly tighter in plot (b) than in plot (a), 
while plots (c) and (d) are essentially identical. 

One complicating factor to this story is that plot (c) looks better than plot (b), even though they use the same $k$.
This is because the variance of the BLB estimates will tend to decrease as $\frac{n}{k}$ increases, up to the point 
where $\frac{n}{k}$ is large enough that the BLB estimates have converged to the sampling distribution of the estimator.
Figure~\ref{fig:blb_estimates} shows the distribution of the BLB estimates of two of our estimated coefficients at the different 
levels of $\frac{n}{k}$ used in plots (b) and (c). 

\begin{figure}[h!]
    \captionsetup{font=footnotesize,labelfont=footnotesize}
    \begin{subfigure}[t]{0.49\textwidth}
        \centering
        \includegraphics[width=\textwidth, height=4cm]{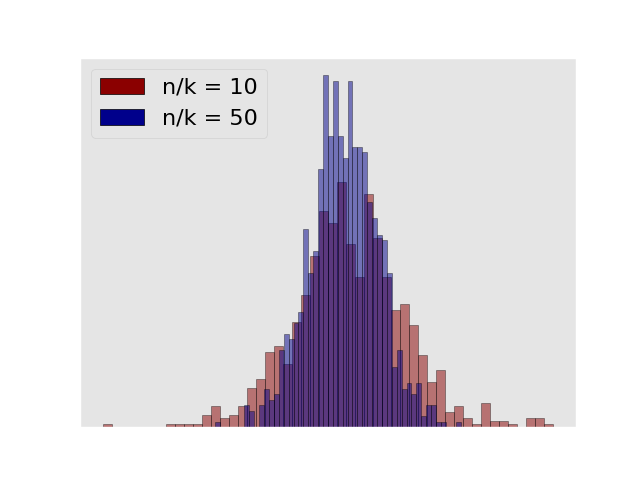}
        \caption{Coefficient 1}
    \end{subfigure}
    \begin{subfigure}[t]{0.49\textwidth}
        \centering
        \includegraphics[width=\textwidth, height=4cm]{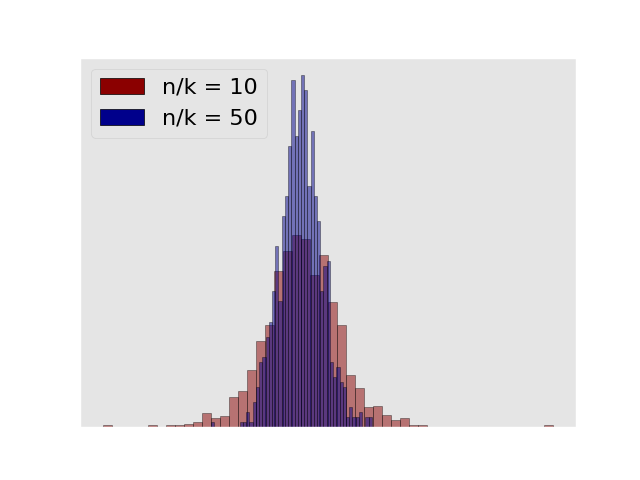}
        \caption{Coefficient 3}
    \end{subfigure}
    \caption{OLS: BLB estimates from a single run}
    \label{fig:blb_estimates}
\end{figure}

Ideally, the analyst should attempt to choose $k$ to be as large as possible, subject to the constraint that the 
BLB estimates, calculated on subsets of size $\frac{n}{k}$, converge to the true sampling distribution of the estimator.

One possible strategy for doing this is for the analyst to test the BLB estimation for their estimator of interest 
on non-sensitive data and use a $k$ that performs well on the non-sensitive data. In particular, an analyst could 
start with $k = 1$ and increase $k$ until the distribution of BLB estimates starts to look substantially different 
(i.e.\ different beyond Monte Carlo error). 
Of course, this approach works only if the analyst can find or generate non-sensitive data that
are similar enough to their sensitive data that the BLB will perform similarly on each.
We discuss the issue further in Section~\ref*{appendix:subsection:choosing_k}.

\subsection{Comparison with UnbiasedPrivacy~\citep{EKST19}}
\label{subsection_comparison_with_ekst}

In Table~\ref{table:ekst_comparison}, we compare our algorithm to the UnbiasedPrivacy (UP) algorithm from~\citet{EKST19}.
UP is designed for estimators with a univariate Gaussian sampling distribution,
so we focus on that setting here.

While the goal of GVDP is to let users set very conservative bounds 
(which the algorithm improves upon)
and not clip any points in the aggregation step,
UP requires that users set reasonably good bounds upfront. 
UP never tightens bounds that are too loose, but it will attempt to 
bias-correct the results if the analyst's chosen bounds clip some of the data (which our 
algorithm does not). 

We test UP against GVDP across five scenarios using the implementation provided in the original paper. 
As suggested in \citet{EKST19}, we split the privacy budget evenly between the mean estimation 
task and the estimation of the proportion of points that are clipped.
We consider two cases in which we expect UP to perform bias correction, 
clipping the top 20\% and 10\% of the BLB estimates, and three in which we don't; 
clipping bounds set as tightly as possible with no clipping, bounds set three times larger than 
the true values of the parameter, and bounds set $1{,}000$ times larger than the true value. 

We run $1{,}000$ simulations, each of which involves
generating $n = 50{,}000$ data points $Y_i \sim N \left( 0, 250 \right)$ 
and using UP and GVDP to generate a private OLS estimator,
using $k = 500$ as the number of subsets for the BLB for each method.
In Table~\ref{table:ekst_comparison}, we provide the average $\ell_1$ error in the coefficient estimate, 
average standard error, and empirical 95\% confidence interval coverage for each method with various levels of clipping bound. 

\begin{table}
    \small 
    \centering 
    \setlength\tabcolsep{3pt} 
    \renewcommand{\arraystretch}{0.5} 
    \begin{tabular}{|c||c|c|c|c|}
        \hline
        \textbf{Bounds} & \textbf{Method} & \textbf{Avg Coef Err} & \textbf{Avg SE} & \textbf{CI Cov} \\ 
        \hline
        Top 20\% Clipped & UP & 0.222 & 0.0872 & 0.237 \\
        & GVDP & 0.007 & 0.207 & 0.970 \\
        \hline
        Top 10\% Clipped & UP & 0.080 & 0.108 & 0.932 \\
        & GVDP & 0.003 & 0.207 & 0.970 \\
        \hline
        Tightest bounds (no clipping) & UP & 0.129 & 0.145 & 0.973 \\
        & GVDP & 0.001 & 0.208 & 0.975 \\
        \hline
        3 times too large & UP & 0.004 & 0.370 & 0.968 \\
        & GVDP & 0.004 & 0.218 & 0.973 \\
        \hline
        $1{,}000$ times too large & UP & 3.239 & 121.090 & 0.950 \\
        & GVDP & 0.005 & 0.701 & 0.961 \\ 
        \hline
    \end{tabular}
    \caption{Comparison of UP and GVDP: Average Coefficient Estimate, Average Standard Error, and Empirical 
    Coverage of 95\% Confidence Intervals}
    \label{table:ekst_comparison}
\end{table}

In the 10\% clipping and tight bound (no clipping) settings, UP mostly delivers as advertised; it gives
(approximately) valid confidence intervals and does so with smaller standard errors than
GVDP does under any bounds.
However, it is unable to achieve this when the top 20\% of BLB estimates are clipped, yielding highly biased coefficient
estimates and poor CI coverage. Moreover, even in the 10\% and tight bound settings, UP
does not appear to achieve truly unbiased coefficient estimates. We believe this is because of 
the error in the bias correction step of UP, created when privately estimating the proportion of clipped data points.

In the cases where the bounds are too conservative, we see that GVDP outperforms UP, 
as GVDP is designed to improve conservative bounds whereas UP is not.
It's notable that in the 10\% and 20\% clipping where GVDP gives no guarantees, 
it appears to provide unbiased coefficient estimates and valid CI coverage. 
This is because of GVDP's variance estimates are conservative by design (to ensure 
valid coverage with high probability), so even when the initial data bounds are set too narrowly 
it is possible that GVDP's overly conservative variances compensate inside of CoinPress such that 
no BLB estimates end up being clipped.
\section{Discussion}
\label{section:discussion} 

We believe that whether or not our method (GVDP) is effective relative to 
other approaches will generally 
come down to a few different factors. 

First, we suggest that GVDP be 
considered primarily when the analyst is not confident in their ability to set
``good'' bounds on their underlying data domain $\mathcal{X}$ for their given estimator.
This is a function of both analyst knowledge of $\mathcal{X}$ and the properties of 
their estimator, as some estimators will be robust even if the analyst sets bounds that clip 
small proportions of the data, while others will not be
(see our demonstration in Section~\ref{subsection:problem_demonstration}).

Second, 
GVDP is likely to work well 
only for reasonably large $n$. Recall that GVDP partitions the 
data $X$ into $k$ subsets of size $\frac{n}{k}$ and bootstrapping the estimator over each subset. 
This creates a tradeoff between the plausibility of our assumptions and the required noise addition 
to satisfy DP.
As $k$ increases, the sensitivity of our aggregator decreases and so too does the variance of the noise 
in our privacy mechanism.

However, we require that the BLB estimates be good estimates of the 
sampling distribution of our estimator in the non-private setting (Assumption~\ref{assump:blb_approx}).
Similar to other bootstrap methods, the BLB's 
guarantees are asymptotic 
(see Section 3 of~\citet{KTSJ14}), and so it is difficult to know how reasonable 
Assumption~\ref{assump:blb_approx} is when $\frac{n}{k}$ is small.

Finally, we believe GVDP can be useful in settings where the dimension of the estimand of interest is significantly 
lower than the dimension of the data. 
Higher dimensional domains create two potential problems for differentially private estimation that are not 
present in non-private estimation. First, setting clipping bounds, generally speaking, gets 
more difficult as the dimension increases. Second, the function sensitivity, and thus variance of the noise 
in our privacy mechanism, increases with the dimension of the function's input domain 
(again see Lemma~\ref*{appendix:lemma:gaussian_mechanism}).
The function being privatized in GVDP takes the BLB parameter estimates
as input rather than the full data, so these two problems are minimized if the estimand is 
low-dimensional relative to the original data.

\bibliographystyle{apalike}
\bibliography{chhk}

\begin{thebibliography}{}

\bibitem[Alabi et~al., 2020]{AMSSV20}
Alabi, D., McMillan, A., Sarathy, J., Smith, A., and Vadhan, S. (2020).
\newblock Differentially private simple linear regression.
\newblock {\em arXiv preprint arXiv:2007.05157}.

\bibitem[Avella-Medina et~al., 2021]{ABL21}
Avella-Medina, M., Bradshaw, C., and Loh, P.-L. (2021).
\newblock Differentially private inference via noisy optimization.

\bibitem[Awan and Slavkovic, 2019]{AS19}
Awan, J. and Slavkovic, A. (2019).
\newblock Differentially private inference for binomial data.

\bibitem[Bandeira and Van~Handel, 2016]{BVH16}
Bandeira, A.~S. and Van~Handel, R. (2016).
\newblock Sharp nonasymptotic bounds on the norm of random matrices with independent entries.

\bibitem[Barrientos et~al., 2019]{BRMY19}
Barrientos, A.~F., Reiter, J.~P., Machanavajjhala, A., and Chen, Y. (2019).
\newblock Differentially private significance tests for regression coefficients.
\newblock {\em Journal of Computational and Graphical Statistics}, 28(2):440--453.

\bibitem[Barrientos et~al., 2021]{BWSB21}
Barrientos, A.~F., Williams, A.~R., Snoke, J., and Bowen, C.~M. (2021).
\newblock A feasibility study of differentially private summary statistics and regression analyses for administrative tax data.

\bibitem[Biswas et~al., 2020]{BDKU20}
Biswas, S., Dong, Y., Kamath, G., and Ullman, J. (2020).
\newblock Coinpress: Practical private mean and covariance estimation.
\newblock {\em Advances in Neural Information Processing Systems}, 33.

\bibitem[Brawner and Honaker, 2018]{BH18}
Brawner, T. and Honaker, J. (2018).
\newblock Bootstrap inference and differential privacy: Standard errors for free.

\bibitem[Bun and Steinke, 2016]{BS16}
Bun, M. and Steinke, T. (2016).
\newblock Concentrated differential privacy: Simplifications, extensions, and lower bounds.
\newblock In {\em Theory of Cryptography Conference}, pages 635--658. Springer.

\bibitem[Canonne et~al., 2019]{CKMSU19}
Canonne, C.~L., Kamath, G., McMillan, A., Smith, A., and Ullman, J. (2019).
\newblock The structure of optimal private tests for simple hypotheses.
\newblock In {\em Proceedings of the 51st Annual ACM SIGACT Symposium on Theory of Computing}, pages 310--321.

\bibitem[Card, 1999]{Car99}
Card, D. (1999).
\newblock The causal effect of education on earnings.
\newblock {\em Handbook of labor economics}, 3:1801--1863.

\bibitem[Cochran, 1954]{Coc54}
Cochran, W.~G. (1954).
\newblock The combination of estimates from different experiments.
\newblock {\em Biometrics}, 10(1):101--129.

\bibitem[Dinur and Nissim, 2003]{DN03}
Dinur, I. and Nissim, K. (2003).
\newblock Revealing information while preserving privacy.
\newblock In {\em Proceedings of the Twenty-Second ACM SIGMOD-SIGACT-SIGART Symposium on Principles of Database Systems}, PODS '03, page 202–210, New York, NY, USA. Association for Computing Machinery.

\bibitem[D'Orazio et~al., 2015]{DHK15}
D'Orazio, V., Honaker, J., and King, G. (2015).
\newblock Differential privacy for social science inference.
\newblock {\em Sloan Foundation Economics Research Paper}, (2676160).

\bibitem[Drechsler et~al., 2021]{DGH+21}
Drechsler, J., Globus{-}Harris, I., McMillan, A., Sarathy, J., and Smith, A.~D. (2021).
\newblock Non-parametric differentially private confidence intervals for the median.
\newblock {\em CoRR}, abs/2106.10333.

\bibitem[Du et~al., 2020]{DFM+20}
Du, W., Foot, C., Moniot, M., Bray, A., and Groce, A. (2020).
\newblock Differentially private confidence intervals.

\bibitem[Duncan and Lambert, 1989]{DL89}
Duncan, G. and Lambert, D. (1989).
\newblock The risk of disclosure for microdata.
\newblock {\em Journal of Business \& Economic Statistics}, 7(2):207--217.

\bibitem[Duncan and Lambert, 1986]{DL86}
Duncan, G.~T. and Lambert, D. (1986).
\newblock Disclosure-limited data dissemination.
\newblock {\em Journal of the American Statistical Association}, 81(393):10--18.

\bibitem[Dwork and Lei, 2009]{DL09}
Dwork, C. and Lei, J. (2009).
\newblock Differential privacy and robust statistics.
\newblock In {\em Proceedings of the forty-first annual ACM symposium on Theory of computing}, pages 371--380.

\bibitem[Dwork et~al., 2006]{DMNS06}
Dwork, C., McSherry, F., Nissim, K., and Smith, A. (2006).
\newblock Calibrating noise to sensitivity in private data analysis.
\newblock In {\em Theory of cryptography conference}, pages 265--284. Springer.

\bibitem[Dwork and Roth, 2014]{DR14}
Dwork, C. and Roth, A. (2014).
\newblock The algorithmic foundations of differential privacy.
\newblock {\em Foundations and Trends in Theoretical Computer Science}, 9(3-4):211--407.

\bibitem[Evans et~al., 2019]{EKST19}
Evans, G., King, G., Schwenzfeier, M., and Thakurta, A. (2019).
\newblock Statistically valid inferences from privacy protected data.

\bibitem[Ferrando et~al., 2021]{FWS21}
Ferrando, C., Wang, S., and Sheldon, D. (2021).
\newblock Parametric bootstrap for differentially private confidence intervals.

\bibitem[Gaboardi et~al., 2016]{GLRV16}
Gaboardi, M., Lim, H., Rogers, R., and Vadhan, S. (2016).
\newblock Differentially private chi-squared hypothesis testing: Goodness of fit and independence testing.
\newblock In {\em International conference on machine learning}, pages 2111--2120. PMLR.

\bibitem[Karwa and Vadhan, 2017]{KV18}
Karwa, V. and Vadhan, S. (2017).
\newblock Finite sample differentially private confidence intervals.
\newblock {\em arXiv preprint arXiv:1711.03908}.

\bibitem[Kleiner et~al., 2014]{KTSJ14}
Kleiner, A., Talwalkar, A., Sarkar, P., and Jordan, M.~I. (2014).
\newblock A scalable bootstrap for massive data.
\newblock {\em Journal of the Royal Statistical Society: Series B: Statistical Methodology}, pages 795--816.

\bibitem[Laurent and Massart, 2000]{LM00}
Laurent, B. and Massart, P. (2000).
\newblock Adaptive estimation of a quadratic functional by model selection.
\newblock {\em Annals of Statistics}, pages 1302--1338.

\bibitem[Lei et~al., 2016]{LCSSF16}
Lei, J., Charest, A.-S., Slavkovic, A., Smith, A., and Fienberg, S. (2016).
\newblock Differentially private model selection with penalized and constrained likelihood.
\newblock {\em arXiv preprint arXiv:1607.04204}.

\bibitem[Li et~al., 2007]{LTV07}
Li, N., Li, T., and Venkatasubramanian, S. (2007).
\newblock t-closeness: Privacy beyond k-anonymity and l-diversity.
\newblock In {\em 2007 IEEE 23rd International Conference on Data Engineering}, pages 106--115. IEEE.

\bibitem[Machanavajjhala et~al., 2007]{MKGV07}
Machanavajjhala, A., Kifer, D., Gehrke, J., and Venkitasubramaniam, M. (2007).
\newblock l-diversity: Privacy beyond k-anonymity.
\newblock {\em ACM Transactions on Knowledge Discovery from Data (TKDD)}, 1(1):3--es.

\bibitem[Nissim et~al., 2007]{NRS07}
Nissim, K., Raskhodnikova, S., and Smith, A. (2007).
\newblock Smooth sensitivity and sampling in private data analysis.
\newblock In {\em Proceedings of the thirty-ninth annual ACM symposium on Theory of computing}, pages 75--84.

\bibitem[Pe{\~n}a and Barrientos, 2021]{PB21}
Pe{\~n}a, V. and Barrientos, A.~F. (2021).
\newblock Differentially private methods for managing model uncertainty in linear regression models.
\newblock {\em arXiv preprint arXiv:2109.03949}.

\bibitem[Reiter, 2005]{Rei05}
Reiter, J.~P. (2005).
\newblock Estimating risks of identification disclosure in microdata.
\newblock {\em Journal of the American Statistical Association}, 100(472):1103--1112.

\bibitem[R{\'e}nyi, 1961]{Ren61}
R{\'e}nyi, A. (1961).
\newblock On measures of entropy and information.
\newblock In {\em Proceedings of the Fourth Berkeley Symposium on Mathematical Statistics and Probability, Volume 1: Contributions to the Theory of Statistics}. The Regents of the University of California.

\bibitem[Ruggles et~al., 2021]{RFFGPSS21}
Ruggles, S., Flood, S., Foster, S., Goeken, R., Pacas, J., Schouweiler, M., and Sobek, M. (2021).
\newblock Ipums usa: Version 11.0 [dataset].

\bibitem[Sheffet, 2017]{She17}
Sheffet, O. (2017).
\newblock Differentially private ordinary least squares.
\newblock In Precup, D. and Teh, Y.~W., editors, {\em Proceedings of the 34th International Conference on Machine Learning}, volume~70 of {\em Proceedings of Machine Learning Research}, pages 3105--3114. PMLR.

\bibitem[{Stein}, 1956]{Ste56}
{Stein}, C. (1956).
\newblock {Inadmissibility of the usual estimator for the mean of a multivariate normal distribution}.
\newblock {Proc. 3rd Berkeley Sympos. Math. Statist. Probability 1, 197-206 (1956).}

\bibitem[Stein and James, 1961]{SJ61}
Stein, C. and James, W. (1961).
\newblock Estimation with quadratic loss.
\newblock In {\em Proc. 4th Berkeley Symp. Mathematical Statistics Probability}, volume~1, pages 361--379.

\bibitem[Sweeney, 2002]{Swe02}
Sweeney, L. (2002).
\newblock k-anonymity: A model for protecting privacy.
\newblock {\em International Journal of Uncertainty, Fuzziness and Knowledge-Based Systems}, 10(05):557--570.

\bibitem[Vladimirova et~al., 2020]{VGNA20}
Vladimirova, M., Girard, S., Nguyen, H., and Arbel, J. (2020).
\newblock Sub‐weibull distributions: Generalizing sub‐gaussian and sub‐exponential properties to heavier tailed distributions.
\newblock {\em Stat}, 9(1).

\bibitem[Vu and Slavkovic, 2009]{VS09}
Vu, D. and Slavkovic, A. (2009).
\newblock Differential privacy for clinical trial data: Preliminary evaluations.
\newblock In {\em 2009 IEEE International Conference on Data Mining Workshops}, pages 138--143. IEEE.

\bibitem[Wang et~al., 2019]{WKL19}
Wang, Y., Kifer, D., and Lee, J. (2019).
\newblock Differentially private confidence intervals for empirical risk minimization.
\newblock {\em Journal of Privacy and Confidentiality}, 9(1).

\bibitem[Wang et~al., 2015]{WLK15}
Wang, Y., Lee, J., and Kifer, D. (2015).
\newblock Revisiting differentially private hypothesis tests for categorical data.
\newblock {\em arXiv preprint arXiv:1511.03376}.

\bibitem[Wang, 2018]{Wan18}
Wang, Y.-X. (2018).
\newblock Revisiting differentially private linear regression: optimal and adaptive prediction \& estimation in unbounded domain.

\bibitem[Wasserman and Zhou, 2010]{WZ10}
Wasserman, L. and Zhou, S. (2010).
\newblock A statistical framework for differential privacy.
\newblock {\em Journal of the American Statistical Association}, 105(489):375--389.

\end{thebibliography}

\newpage
\appendix
\appendixpage
\section{Definitions}
\label{appendix:definitions}

\subsection{Differential Privacy}
\label{appendix:subsection:differential_privacy_definitions}
We begin with an introduction to the core definitions of DP.

\begin{definition}[Neighboring data sets]
    \label{appendix:defn:neighboring_data_sets}
    Let $\mathcal{X}$ be a data universe and $D, D' \in \mathcal{X}^n$.
    We say that $D, D'$ are neighboring if 
    \[ \max \left( |D \setminus D'|, |D' \setminus D| \right) = 1. \]
    We also define the set of all neighboring data sets as 
    \[ \mathcal{D}_n = \{ (D, D') \in \mathcal{X}^n \times \mathcal{X}^n : D, D' \text{ are neighbors} \}. \]
\end{definition}

\begin{definition}[R{\'e}nyi divergence~\citep{Ren61}]
    \label{appendix:defn:renyi_divergence}
    Let $P,Q$ be probability measures over a measurable space $(\Omega, \Sigma)$. Then we define 
    the $\alpha$-R{\'e}nyi divergence between $P,Q$ as  
    \[ H_{\alpha} \left( P \Vert Q \right) = \frac{1}{\alpha-1} \ln \int_{\Omega} P(x)^{\alpha} Q(x)^{1-\alpha}dx. \]
\end{definition}

\begin{definition}[Global Function Sensitivity]
    \label{appendix:defn:global_sensitivity}
    Let $\mathcal{X}$ be a data domain, 
    $\gamma: \mathcal{X}^n \rightarrow \R^d$, and
    $\mathcal{D}_n$ be the set of all neighboring data sets as in Definition~\ref{appendix:defn:neighboring_data_sets}.
    Then we write the global sensitivity of $\gamma$ with respect to a distance metric $d$ as 
    \[ GS_{d}(\mathcal{X}^n, \gamma) = \max_{D, D' \in \mathcal{D}_n} d \left(\gamma(D), \gamma(D') \right). \]
\end{definition}

Algorithms can be made to respect DP in a variety of ways, but the most common way (as well as 
the approach we use in this work) is via an 
\emph{additive noise mechanism}. This just entails running the algorithm as one would normally, and then 
adding random noise scaled relative to the algorithm's sensitivity.

Throughout this work, we use a popular additive noise mechanism called the \emph{Gaussian mechanism}.
\begin{lemma}[Gaussian Mechanism]
    \label{appendix:lemma:gaussian_mechanism}
    Let $f : \mathcal{X}^n \rightarrow \R^d$ have global $\ell_2$ sensitivity $GS_{\ell_2}(\mathcal{X}^n, f)$.
    Then the Gaussian mechanism 
    \[ \mathcal{M}_f(D) = f(D) + N \left( 0, \left( \frac{GS_{\ell_2}(\mathcal{X}^n, f)}{\sqrt{2 \rho}} \right)^2 I_{d} \right) \]
    satisfies $\rho$-zCDP.
\end{lemma} 

Note that it is often necessary to bound the data domain $\mathcal{X}$ to ensure that 
$GS_{\ell_2}(\mathcal{X}^n, f) < \infty$.
For example, let $\mathcal{X} = \R^d$, $D = (D_1, \hdots, D_n)$ with $D_i \in \mathcal{X}$, 
and $f: \R^{n \times d} \rightarrow \R^d$ be such that $f(D) = n^{-1} \sum_{i=1}^{n} D_i$. 
If we let $D' = (\infty, D_2, \hdots, D_n)$, then $D,D'$ are neighbors (they differ only in the first element),
but $\Vert f(D) - f(D') \Vert_2 = \infty$. 
If instead $\mathcal{X} = [0,1]^d$, then the $D,D'$ that induce the largest difference in $f$ are 
$D = (\vec{1}, D_2, \hdots, D_n)$ and $D' = (\vec{0}, D_2, \hdots, D_n)$. In this scenario,
$\Vert f(D) - f(D') \Vert_2 = \Vert n^{-1} (\vec{1} - \vec{0}) \Vert_2 = n^{-1} \sqrt{d}$, 
and thus $GS_{\ell_2}(\mathcal{X}^n, f) = n^{-1} \sqrt{d}$.

These bounds must be set without looking at the particular $D_i$, and 
are generally chosen by a data analyst based on public metadata and/or their beliefs about the 
data-generating process. 

\subsection{Statistical Inference}
\label{subsection:statistical_inference_definitions}

This need to bound $\mathcal{X}$ introduces complications for doing statistical inference under DP, while 
maintaining the types of guarantees we often want from non-private estimators. 
We focus specifically on unbiased estimators and valid confidence sets.

\begin{definition}[Unbiased Estimator]
    \label{appendix:defn:unbiased_estimator}
    Let $\theta \in \R^d$ be a model parameter we wish to estimate. We collect data $D \sim \mathcal{D}$ and estimate 
    $\theta$ with a random variable $\hat{\theta}: \mathcal{D} \rightarrow \R^m$. We say that $\hat{\theta}$ is an unbiased 
    estimator of $\theta$ if 
    \[ \E \left( \hat{\theta}(D) \right) = \theta, \]
    with randomness taken over the sampling of $D \sim \mathcal{D}$, 
    as well as any other randomness in $\hat{\theta}$.   
\end{definition}

Many applied statisticians, particularly those interested in estimating 
causal effects using linear models, prize unbiased parameter estimation 
and are willing to sacrifice on other fronts to achieve it. 
For example, the standard OLS estimator (which is the minimum-variance unbiased estimator 
under the assumptions of the Gauss-Markov theorem) is used for estimating parameters 
of a linear regression model in favor of other biased estimators, such as 
the James-Stein estimator~\citep{Ste56, SJ61}, which dominate it 
in terms of $\ell_2$ error of the parameter estimates.   

\begin{definition}[Confidence Set]
    \label{appendix:defn:confidence_set}
    Let $\theta \in \R^d$ be a model parameter we wish to estimate using data $D \sim \mathcal{D}$.
    For arbitrary $\alpha \in [0,1]$, a $(1-\alpha)$-level confidence set for $\theta$ is a 
    random set $S \subseteq \R^d$ such that
    \[ \Pr \left( \theta \in S \right) = 1 - \alpha, \]
    with randomness taken from the sampling of $D$ and any other randomness in the 
    construction of $S$.  
\end{definition}

Ideally, we would be able to find a perfectly-calibrated confidence set, where the 
coverage probability (i.e. $\Pr(\theta \in S)$) is exactly $1 - \alpha$.
However, this is often impossible to compute exactly and so practitioners 
tend to default to being overly conservative instead.
In this setting, we require $\Pr \left( \theta \in S \right) \geq 1 - \alpha$
and call such an $S$ a \emph{valid confidence set}.
In this work, we focus on \emph{confidence regions}, which are 
contiguous confidence sets, and occasionally \emph{confidence intervals}, which 
are univariate confidence regions.

We can simplify the general problem of constructing confidence sets 
by restricting our attention to estimators 
whose sampling distribution belongs to a symmetric multivariate location-scale family.

\begin{definition}[Location-Scale Family]
	A set of probability distributions is a location-scale family if any
	density $f(x; \mu, \Sigma)$ in the set is written as 
	$f(x; \mu, \Sigma) = c \vert \Sigma \vert^{-1/2} \exp \left( -\frac{1}{2} (x-\mu)^T \Sigma^{-1} (x-\mu) \right)$.
	for some normalization constant $c$.
\end{definition}

Our restriction to location-scale families ensures that estimating the mean and 
(co)variance of the estimator is sufficient to characterize its distribution.

\section{Step 1: Bag of Little Bootstraps}
\label{appendix:bag_of_little_bootstraps}

This algorithm statement is adapted and simplified for our purposes; 
readers interested in the original version should consult~\cite{KTSJ14}.
We say that $\mathcal{X}$ is our data universe, $\mathcal{D}$ is a distribution 
over $\mathcal{X}$, and our realized data $X \in \R^{n \times m}$ are drawn from 
$\mathcal{D}^n$. For an arbitrary estimator $\hat{\theta}: \mathcal{X}^n \rightarrow \R^d$,
we define $\hat{\theta}(\mathcal{D}) = \E_{X \sim \mathcal{D}^n} \left( \hat{\theta}(X) \right)$.

\begin{algorithm}[H]
	\footnotesize
	\caption{Bag of little bootstraps (BLB)}
	\label{appendix:alg:blb}
	\hspace*{\algorithmicindent}\textbf{Input:} data set $X \in \R^{n \times m}$, 
												estimator $\hat{\theta}: \mathcal{X}^n \rightarrow \R^d$,
												estimator quality assessment $\xi$,
												$k$ number of subsets of partition,
												$r$ number of bootstrap simulations \\
	\hspace*{\algorithmicindent}\textbf{Output:} $k$ estimates of $\hat{\theta}(\mathcal{D})$ 
	\begin{algorithmic}[1]
		\Procedure{\blb}{$X, \hat{\theta}, k, r$}
			\State Randomly partition $X$ into $k$ subsets $\{X_i\}_{i \in [k]}$
			\For{$i \in [k]$}
				\State $b = \vert X_i \vert$
				\State $\{\hat{\theta}_{i,c}\}_{c \in [r]} = \varnothing$ 
				\For{$c \in [r]$}
					\State sample $(n_1, \hdots, n_b) \sim \text{Multinomial}(n, \mathbf{1}_b / b)$
					\State create $X_i^U \in \R^{n \times m}$ by including the $j^{th}$ element of 
						$X_i$ $n_j$ times
					\State $\hat{\theta}_{i,c} = \hat{\theta}(X_i^U)$
				\EndFor
				\State $\hat{\theta}_i = \xi \left( \{ \hat{\theta}_{i,c} \}_{c \in [r]} \right)$
			\EndFor 
			\State \Return $\{\hat{\theta}_i\}_{i \in [k]}$
		\EndProcedure
	\end{algorithmic}
\end{algorithm}

\section{Step 2: Differentially Private Estimation}
\label{appendix:section:general_mean_estimation}

\begin{definition}
	\label{appendix:defn:heavier_tails}
	Let $\mathcal{B}(\mu, \Sigma)$ and $\mathcal{C}(\mu,\Sigma)$ be families of distributions and 
	$B,C$ be random variables drawn from each such that $\E(B) = \E(C) = \mu$ and $\text{Cov}(B) = \text{Cov}(C) = \Sigma$. 
	Let $PSD_d$ be the set of all $d \times d$ PSD matrices.
	We say that $\mathcal{B}$ is \emph{heavier-tailed} than 
	$\mathcal{C}$
	if for all $\mu \in \R^d, \Sigma \in PSD_d, \text{ and }
	v \in \R^d \text{ such that } \Vert v \Vert_2 = 1$, then  
	\[ \Pr \left( v^T (B - \mu) \leq z \right) 
	   \leq \Pr \left( v^T (C - \mu) \leq z \right),
	 \]
	 for all $z > 0$.
\end{definition}

\subsection{Modified CoinPress Algorithm}
\label{appendix:subsection:modified_coinpress_algorithm}

\begin{algorithm}[H]
	\footnotesize
    \caption{Modified CoinPress}
    \label{appendix:alg:hd_mean_est}
    \hspace*{\algorithmicindent} \textbf{Input:} $X = (x_1, \hdots, x_k)$ from a distribution $D$ with mean $\mu$ and covariance $\Sigma$,
													$\tilde{\Sigma}$ such that $\Sigma \preceq \tilde{\Sigma}$, 
													$B_2(\tilde{\mu}_0,r_0)$ containing $\mu$, 
													family of distributions $Q_{X}(\cdot, \Sigma_{\mu})$ with heavier 
													tails than $D$, 
													number of iterations $t \in \mathbb{N}^+$, 
													zCDP privacy loss parameter $\rho > 0$, 
													failure probability $\beta > 0$ \\ 
    \hspace*{\algorithmicindent} \textbf{Output:} $t$ estimates of $\mu$ that jointly respect $\rho$-zCDP 
    \begin{algorithmic}[1]
        \Procedure{\mvmeaniter}{$X, \tilde{\mu}_0, r_0, \tilde{\Sigma}, Q, t, \rho, \beta$} 
		\State $S = \tilde{\Sigma}^{1/2}$
		\State $\tilde{\mu}_0 = S^{-1} \tilde{\mu}_0$
		\State $r_0 = \max \left( \diag \left( S^{-1} \right) \right) \cdot r_0$
		\State Define $\bar{X} \in \R^{k \times d}$ such that $\forall j \in [d], \forall m \in [k] : \bar{X}_{m,j} = \frac{1}{k} \sum_{m'=1}^{k} x_{m',j}$.
			   Note that each row $\bar{X}_{m,:}$ is equal to the $d$-dimensional empirical mean of $X$ 
		\State $X' = \left( X - \bar{X} \right) S^{-1}$ 
        \For {$m \in [t-1]$} \label{ln:hd_mean_est-loop}
			\State $(\tilde{\mu}_m, r_m, \sigma_m) = \mvmeanstep(X', \tilde{\mu}_{m-1}, r_{m-1}, Q_{X}(0, I_d), \tfrac{\rho}{2(t-1)}, \tfrac{\beta}{t})$ \Comment{Algorithm~\ref{appendix:alg:hd_mean_step}} \label{appendix:ln:hd_mean_est-mvm1}
		\EndFor
		\State $(\tilde{\mu}_t,r_t,\sigma_t) = \mvmeanstep(X', \tilde{\mu}_{t-1}, r_{t-1}, Q_{X}(0, I_d), \tfrac{\rho}{2}, \tfrac{\beta}{t})$ \label{appendix:ln:hd_mean_est-mvm2}
		\State $\forall m \in [t]: \tilde{\mu}_m = \left( S \tilde{\mu}_m \right) + \bar{\mu}_{1,:}$ \Comment{convert mean estimates to proper scale} \label{appendix:ln:hd_mean_est-scale_mu}
		\State $\forall m \in [t]: \vec{\sigma}^2_{m} = \diag \left( S \sigma_m \right)^2$ \Comment{convert private noise variances to proper scale} \label{appendix:ln:hd_mean_est-scale_sigma}
		\State \Return $\{ (\tilde{\mu}_m, \vec{\sigma}^2_m)\}_{m \in [t]}$ \label{appendix:ln:hd_mean_est-return}
        \EndProcedure
    \end{algorithmic}
\end{algorithm}

\subsection{Modified CoinPress Algorithm - One Step Improvement}
\label{appendix:subsection:modified_coinpress_algorithm_one_step}

\begin{algorithm}[H]
	\footnotesize
    \caption{One Step Private Improvement of Mean Ball}
    \label{appendix:alg:hd_mean_step}
    \hspace*{\algorithmicindent} \textbf{Input:} $x = (x_1, \hdots, x_k)$ from a distribution with mean $0$ and covariance 
												 with smaller L{\"o}wner order than $I_d$, 
												 $B_2(\tilde{\mu}, r)$ containing $0$, 
												 family of distributions $Q_{X}(\cdot, I_d)$,
												zCDP privacy loss parameter $\rho_m > 0$, failure probability $\beta_m > 0$ \\
    \hspace*{\algorithmicindent} \textbf{Output:} A $\rho_s$-zCDP ball $B_2(\tilde{\mu}',r')$ and scale of the privatizing noise $\sigma$
    \begin{algorithmic}[1]
        \Procedure{\mvmeanstep}{$\hat{M}, \tilde{\mu}, r, Q_X, \rho_m, \beta_m$} 
			\State $\beta_s = \beta_m / 2$
			\State Let $R \sim Q_{X}(0, I_d)$ \label{appendix:ln:hd_mean_step-R}
            \State Set $\gamma_1$ such that $\Pr \left( \Vert R \Vert_2 > \gamma_1 \right) \leq \frac{\beta_s}{k}$ \label{appendix:ln:hd_mean_step-gamma1}
            \State Set $\gamma_2$ such that $\Pr \left( \Vert R \Vert_2 > \gamma_2 \right) \leq \beta_s$ \label{appendix:ln:hd_mean_step-gamma2} 
			\State Project each $x_i$ into $B_2(\tilde{\mu}, r + \gamma_1)$.\label{appendix:ln:hd_mean_step-trunc}
			\State $\Delta = 2(r+\gamma_1) / k$. \label{appendix:ln:hd_mean_step-sens}
			\State $\sigma = \frac{\Delta}{\sqrt{2\rho_s}}$ \label{appendix:ln:hd_mean_step_sigma}
			\State Compute $\tilde{\mu}' = \frac{1}{k}\sum_i x_i + Y$, where $Y \sim \text{N}\left(0,\sigma^2 I_{d}\right).$ \label{appendix:ln:hd_mean_step-gm}
			\State $r' = \gamma_2 \sqrt{ \frac{1}{k} + \frac{2 (r+\gamma_1)^2}{k^2 \rho_s} }$ \label{appendix:ln:hd_mean_step_rp}
			\State \Return $(\tilde{\mu}',r',\sigma)$. \label{appendix:ln:hd_mean_step-return}
        \EndProcedure
    \end{algorithmic}
\end{algorithm}

\subsection{Privacy Analysis of Algorithm~\ref{appendix:alg:hd_mean_est}}
\label{appendix:subsection:proof:privacy_analysis}

\begin{theorem}[Modified CoinPress Privacy Statement]
	\label{appendix:thm:modified_coinpress_privacy_statement}
	Algorithm~\ref{appendix:alg:hd_mean_est} produces 
	$t$ estimates of $\mu$ that jointly respect 
	$\rho$-zCDP.
\end{theorem}

\begin{proof}
	Algorithm~\ref{appendix:alg:hd_mean_est} begins and ends by scaling the data to have 
	empirical mean $0$ and covariance which is L{\"o}wner upper bounded by $I_d$.
	The covariance scaling parameter is chosen independently of the data and the 
	rest of the steps in the algorithm are invariant under location shift. 
	So, our privacy analysis rests on the application of Algorithm~\ref{appendix:alg:hd_mean_step} 
	in lines~\ref{appendix:ln:hd_mean_est-mvm1} and~\ref{appendix:ln:hd_mean_est-mvm2} of Algorithm~\ref{appendix:alg:hd_mean_est}.

	Algorithm~\ref{appendix:alg:hd_mean_step} interacts with the raw data only in line~\ref{appendix:ln:hd_mean_step-gm}, 
	so satisfying DP reduces to correct specification of $\Delta$ (the $\ell_2$ sensitivity of the mean) and application of the Gaussian mechanism.
	The data are projected into $B_2(\tilde{\theta}, r + \gamma_1)$, and so the most a single data point can be changed in $\ell_2$ norm is $2(r+\gamma_1)$. 
	Because neighboring data sets $X,Y$ differ in only one point (call it $z$), the $\ell_2$ norm of the $k-1$ other points remains the same and so 
	\[
		\left\Vert \frac{1}{k} \sum_{x \in X} x - \frac{1}{k}\sum_{y \in Y} y \right\Vert_2 
			= \left\Vert \frac{1}{k} z \right\Vert_2
			= \frac{1}{k} \left\Vert z \right\Vert_2
			\leq \frac{2(r+\gamma_1)}{k}
	\]
 	as desired.	
	Thus, each step of CoinPress satisfies zCDP at the stated level of its privacy parameter $\rho$. 
	For each step $m \in [t-1]$, we see in line~\ref{appendix:ln:hd_mean_est-mvm1} that the privacy parameter is 
	$\tfrac{\rho}{2(t-1)}$.
	For step $t$, we see in line~\ref{appendix:ln:hd_mean_est-mvm2} that the privacy parameter is 
	$\tfrac{\rho}{2}$. Because zCDP parameters compose additively,
	the zCDP parameter for the entire CoinPress algorithm is 
	$(t-1) \tfrac{\rho}{2(t-1)} + \tfrac{\rho}{2} = \rho$.
\end{proof}

\subsection{Proof of Theorem~\ref{theorem:mean_est_form}}
\label{appendix:subsection:proof:mean_est_form}

\begin{proof}
	We start with Assumption~\ref{assump:initial_parameter_bounds} 
	so we have $\mu \in B_2 \left( \tilde{\mu}_0, r_0 \right)$. 
	Note that the clipping bounds, parameterized by $\gamma_1$, in line~\ref{appendix:ln:hd_mean_step-gamma1} of 
	Algorithm~\ref{appendix:alg:hd_mean_step} are set such that, with probability 
	$1 - \beta_s$, no points are affected by the bounding; this follows because any given point is affected 
	only if it falls outside the clipping ball, which occurs with probability $\leq \frac{\beta_s}{k}$ and so,
	by the union bound, every point is unaffected with probability $\geq 1 - \beta_s$. 
	Thus, 
	with probability $\geq 1 - \beta_s$:
	\begin{align*}
		\mu' 
			&\sim  \frac{1}{k} \sum_i^k \hat{\mu}_i + Y \\
			&\sim \hat{\mu} + Y && \left( \text{definition of } \hat{\mu} \right) \\
			&\sim \text{N} \left( \hat{\mu}, \sigma^2 I_d \right).
	\end{align*}

	We now consider $\gamma_2$, which is set as a $1 - \beta_s$ probability upper bound on the $\ell_2$ norm of the 
	privatized mean of $k$ draws from $Q(0, \tilde{\Sigma})$. 
	Conditional on no points being clipped so that $\tilde{\mu}' = \sum_{i=1}^{k} \hat{\mu}_i + Y$, 
	we have
	\begin{align}
		1 - \beta_s
			&\leq \Pr \left( \left\Vert \frac{1}{k} \sum_{i=1}^{k} \hat{\mu}_i - \mu + Y \right\Vert_2 \leq \gamma_2 \right) \\
			&= \Pr \left( \left\Vert \tilde{\mu}' - \mu \right\Vert_2 \leq \gamma_2 \right).
	\end{align}
	So, having $\mu \in B_2(\tilde{\mu}_0, r_0)$ implies that 
	$\Pr \left( \mu \in B_2(\tilde{\mu}', r') \right) \geq 1 - 2\beta_s = 1 - \beta_m$.
	Using the fact that $\sum_m^t \beta_m = \beta^{\mu}$ and a union bound, we proceed by induction over the $t$ steps of the algorithm
	and see that with probability $1 - \beta^{\mu}$ we have  
	\[ \forall m \in [t]: \mu \in B_2(\tilde{\mu}_m, r_m) \]
	and 
	\[ \forall m \in [t]: \mu_m' \sim \text{N} \left( \hat{\mu}, \sigma_m^2 I_d \right). \]
	Scaling $\mu_m', \sigma^2_m$ back up as in Lines~\ref{appendix:ln:hd_mean_est-scale_mu} and~\ref{appendix:ln:hd_mean_est-scale_sigma}
	give the desired result. 
\end{proof}

\subsection{Setting $\gamma_1, \gamma_2$}
\label{appendix:subsection:setting_gamma_1_gamma_2}
This section is concerned with how to set $\gamma_1,\gamma_2$ in 
lines~\ref{appendix:ln:hd_mean_step-gamma1},~\ref{appendix:ln:hd_mean_step-gamma2} of Algorithm~\ref{appendix:alg:hd_mean_step}
for various $Q_{\tilde{\mu}}$.
We start with a general statement that works for arbitrary $Q_{\tilde{\mu}}$.

\begin{fact}[Chebyshev's Inequality]
	If $X$ is a $d$-dimensional random vector with expected value $\mu = \E(X)$ and covariance 
	$\Sigma = \E \left( (X-\mu)(X-\mu)^T \right)$, then 
	\[ \Pr \left( \sqrt{ (X-\mu)^T \Sigma^{-1} (X-\mu) } > t \right) \leq \frac{d}{t^2}, \]
	provided that $\Sigma$ is positive definite.
\end{fact}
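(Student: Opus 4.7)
The plan is to reduce the multivariate statement to the scalar Markov inequality by whitening the random vector. Since $\Sigma$ is positive definite, I can define $Y = \Sigma^{-1/2}(X - \mu)$, where $\Sigma^{-1/2}$ denotes the (symmetric) positive definite square root of $\Sigma^{-1}$. The idea is that $Y$ is the natural standardization of $X$, and the quadratic form inside the probability is just $\|Y\|_2^2$.

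First I would verify the two basic properties of $Y$: linearity of expectation gives $\E(Y) = \Sigma^{-1/2}\E(X - \mu) = 0$, and a direct calculation gives
\[ \E(YY^T) = \Sigma^{-1/2}\E\bigl((X-\mu)(X-\mu)^T\bigr)\Sigma^{-1/2} = \Sigma^{-1/2}\Sigma\,\Sigma^{-1/2} = I_d. \]
Next I would rewrite the event of interest in terms of $Y$: by definition of $Y$,
\[ (X-\mu)^T \Sigma^{-1}(X-\mu) = (X-\mu)^T \Sigma^{-1/2}\Sigma^{-1/2}(X-\mu) = Y^T Y = \|Y\|_2^2, \]
so the event $\{\sqrt{(X-\mu)^T\Sigma^{-1}(X-\mu)} > t\}$ is precisely $\{\|Y\|_2^2 > t^2\}$.

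Then I would compute the expected squared norm using the trace trick: $\E\|Y\|_2^2 = \E\Tr(YY^T) = \Tr(\E(YY^T)) = \Tr(I_d) = d$. Applying the standard (scalar) Markov inequality to the nonnegative random variable $\|Y\|_2^2$ with threshold $t^2$ yields
\[ \Pr\bigl(\|Y\|_2^2 > t^2\bigr) \leq \frac{\E\|Y\|_2^2}{t^2} = \frac{d}{t^2}, \]
which is exactly the claimed bound.

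There is no real obstacle here; the only point requiring any care is the existence and symmetry of $\Sigma^{-1/2}$, which follows from the assumption that $\Sigma$ is positive definite (so it has a full spectral decomposition with strictly positive eigenvalues, and we can take the principal square root of its inverse). Everything else is linearity of expectation, the trace trick, and scalar Markov.
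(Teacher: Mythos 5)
Your proof is correct: whitening via $\Sigma^{-1/2}$, computing $\E\|Y\|_2^2 = \Tr(I_d) = d$ by the trace trick, and applying scalar Markov to $\|Y\|_2^2$ at threshold $t^2$ is the standard derivation of this bound. The paper states this as a known fact without providing any proof, so there is no argument to compare against; your proposal supplies the missing justification completely, with the only (harmless) implicit assumption being $t > 0$.
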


\begin{corollary}
	\label{appendix:corollary:general_R_bound}
	For any $R$ in Algorithm~\ref{appendix:alg:hd_mean_step},
	$\Pr \left( \Vert R \Vert_2 > \sqrt{d / \beta} \right) \leq \beta$.
\end{corollary}
\begin{proof}
	By construction of $R$, we know that $\mu = 0$ and $\Sigma = I_d$.
	Let $R_j$ be the $j^{th}$ element of $R$.
	Then we can write 
	\[
		\sqrt{ (R-\mu)^T \Sigma^{-1} (R-\mu) }
			= \sqrt{ R^T R }
			= \left(\sum_{j=1}^{d} R_j^2\right)^{1/2}
			= \Vert R \Vert_2
	\]
	We can set $t = \sqrt{d / \beta}$ and rewrite Chebyshev's Inequality as
	\[
		\Pr \left( \Vert R \Vert_2 > \sqrt{d / \beta} \right) \leq \beta. 
	\] 
\end{proof}

In practice, it is beneficial to set tighter bounds 
based on the specified $Q_{\tilde{\mu}}$.
This can hypothetically be done via Monte Carlo sampling and empirical CDF 
inequalities .
However, this can be computationally expensive for $\gamma_1$ in particular,
as you need at least $k / \beta$ draws (and often far more) from the random variable
to get a proper upper bound. 

Some $Q_{\tilde{\mu}}$ also admit analytical bounds, which avoid the need for the 
costly computation. If $Q_{\tilde{\mu}}$ is multivariate Gaussian, we can use the following: 
\begin{fact}[Lemma 1 of~\cite{LM00}]
	\label{appendix:fact:bounding_R-N}
	Let $Q_{\tilde{\mu}}$ be multivariate Gaussian such that $Q_{\tilde{\mu}}(\mu, \Sigma) = N \left( \mu, \Sigma \right)$.
	Then if $R \sim Q_{\tilde{\mu}}(0, I_d)$, we know that
	\[ \forall \beta \in (0,1]: \Pr \left( \Vert R \Vert_2 > \sqrt{d + \sqrt{d \log(1/\beta)} + 2 \log(1/\beta)} \right) \leq \beta. \]
\end{fact}

We present a similar bound for when $Q_{\tilde{\mu}}$ is multivariate Laplace, 
based heavily on a result from Corollary 3.1 from~\citet{VGNA20}.
\begin{theorem}
	\label{appendix:theorem:bounding_R-MVL}
	Let $Q_{\tilde{\mu}}$ be multivariate Laplace with mean $\mu = 0$ and covariance $\Sigma = I_d$.
	Then if $R \sim Q_{\tilde{\mu}}(0, I_d)$, we know that 
	\[ \forall \beta \in (0,1]: \Pr \left( \Vert R \Vert_2 > \sqrt{e \cdot d\log^2(\beta)} \right) \leq \beta, \]
	where $e \approx 2.718$ is Euler's number.
\end{theorem}
\begin{proof}
	We start by noting that $\Vert R \Vert_2 = \left( \sum_{j=1}^{d} R_j^2 \right)^{1/2}$.
	We know that the $R_j$ are Laplace with mean $0$ and variance $1$,
	and thus $\forall j \in [d]: R_j^2 \sim Weibull(\lambda = 1/2, k = 1/2)$.
	For ease of notation, we'll call $X_j = R_j^2$.

	We now define sub-Weibull random variables, as is done in~\citet{VGNA20}.
	We call a random variable $X_j$ \emph{sub-Weibull} with tail parameter $\theta$ if there exists $\theta,a,b > 0$
	such that 
	\[ \forall x > 0: \Pr \left( \vert X_j \vert \geq x \right) \leq a \exp \left( -b x^{1/\theta} \right). \]
	For context, sub-Gaussian random variables are sub-Weibull with $\theta = 1/2$, sub-Exponentials are 
	sub-Weibull with $\theta = 1$, and Weibull random variables themselves are sub-Weibull with $\theta = 2$.

	We can state an alternative condition, also from~\citet{VGNA20}, 
	that $X_j$ is sub-Weibull with tail parameter $\theta$ if
	\[ \exists c > 0 \text{ s.t. } \forall t \geq 1: \Vert X_j \Vert_t \leq c t^{\theta}. \]

	Our goal is to find the smallest $c$ that holds for Weibull random variables 
	in particular. We recall that $X_j \sim Weibull(\lambda = 1/2, k = 1/2)$ and 
	$\theta = 2$. Thus, for all $t \geq 1$:
	\begin{align} 
		 \Vert X_j \Vert_t &\leq c t^{\theta} \nonumber \\
	\iff \left( \E \left( \left\vert X_j \right\vert^t \right) \right)^{1/t} &\leq c t^{\theta} \nonumber \\
	\iff \lambda \Gamma \left( \frac{t}{k} + 1 \right)^{1/t} &\leq c t^{\theta} \label{appendix:ln:mvl_proof_1} \\
	\iff \frac{1}{2t^2} \Gamma(2t+1)^{1/t} &\leq c \label{appendix:ln:mvl_proof_2}.
	\end{align} 
	Line~\ref{appendix:ln:mvl_proof_1} follows by using the MGF of a Weibull
	random variable, and line~\ref{appendix:ln:mvl_proof_2} follows by plugging in the parameter values.

	Our goal is to find the smallest $c$ such that $\Vert X \Vert_t \leq c t^{\theta}$ 
	for all $t \geq 1$. The lefthand side of line~\ref{appendix:ln:mvl_proof_2} is decreasing in $t$ 
	for $t \geq 1$, so finding the smallest possible $c$ for $t=1$ will be sufficient for all $t \geq 1$.
	Plugging in $t=1$, we get $c=1$.

	We can finally appeal to Corollary 3.1 from~\citet{VGNA20}, which states that 
	if $X_1, \hdots, X_d$ are i.i.d. Weibull random variables with tail parameter 
	$\theta$, then for all $x \geq d K_{\theta}$ we have 
	\[ \Pr \left( \left\vert \sum_{j=1}^{d} X_j \right\vert \geq x \right)
		\leq \exp \left( - \left( \frac{x}{K_{\theta} d} \right)^{1/\theta} \right)
	\] 
	for $K_{\theta} = e c$.
	Plugging in the $c=1$ we found for Weibull random variables yields
	\[ \Pr \left( \left\vert \sum_{j=1}^{d} X_j \right\vert \geq x \right)
		\leq \exp \left( - \left( \frac{x}{e \cdot d} \right)^{1/\theta} \right).
	\] 
	We want the probability to be less than $\beta$, so we sub this in and get 
	\[ \Pr \left( \left\vert \sum_{j=1}^{d} X_j \right\vert \geq e \cdot d \log^2(\beta) \right) \leq \beta. \]
	We note that $\Vert X \Vert_2 = \sqrt{\left\vert \sum_{j=1}^{d} X_j \right\vert}$, so 
	setting the bound at $\sqrt{e \cdot d \log^2(\beta)}$ gives our desired result.
\end{proof}

\subsection{Trick for setting $\tilde{\Sigma}$ for $\hat{\theta}^{BLB}$ estimation}
\label{appendix:subsection:trick_for_mean_estimation}

In our GVDP algorithm, we independently estimate the means of both the 
$\{ \hat{\theta}_i^{BLB} \}_{i \in [k]}$ and $\{ \hat{\Sigma}_i^{BLB} \}_{i \in [k]}$, each requiring (among other things)
that the analyst specify $\tilde{\Sigma}$, a L{\"o}wner upper bound on the sample 
covariance of the BLB samples. If we estimate the mean of $\{ \hat{\Sigma}_i^{BLB} \}_{i \in [k]}$ and do our post-processing 
to find a private covariance estimate $\tilde{\Sigma}$
prior to estimating the mean of $\{ \hat{\theta}_i^{BLB} \}_{i \in [k]}$, we can actually leverage some extra information 
that will generally improve our estimates with a small cost to the theoretical guarantee.
All the experimental results in the paper use this trick.

Although we are scaling up our subsets to the original data size within the BLB to get correct overall covariance 
estimates, this does not imply that the covariance of the $\{ \hat{\theta}_i^{BLB} \}_{i \in [k]}$ match this
correct scaling. In fact, this covariance will often be roughly the same as if $\hat{\theta}$ were simply 
run on subsets of size $\frac{n}{k}$. So, the covariance of the $\{ \hat{\theta}_i^{BLB} \}_{i \in [k]}$
should be roughly $\frac{r(n/k)}{r(n)} \hat{\Sigma}$, where $r$ is the convergence rate of the estimator 
in question. For example, the covariance of OLS coefficients decays with $\frac{1}{n}$, so if $\hat{\theta}$
represents OLS estimation we would say the covariance is $\frac{1/(n/k)}{1/n}\hat{\Sigma} = k \hat{\Sigma}$. 
We upper bound this with $k \tilde{\Sigma}$.

Under Assumption~\ref{assump:blb_approx}, this strategy gives us
a $1-\beta^{\tilde{\Sigma}}$ probability guarantee that $k \tilde{\Sigma}$ will 
L{\"o}wner upper bound the empirical covariance of the $\{ \hat{\theta}_i^{BLB} \}_{i \in [k]}$. 
So, by using $k \tilde{\Sigma}$ as the upper covariance bound for our mean estimation for 
$\{ \hat{\theta}_i^{BLB} \}_{i \in [k]}$, we generally start with a pretty tight bound 
and can dramatically improve the accuracy of our estimates. 
This does lose a bit of theoretical strength in the results; we generally assume that 
the analyst's upper bound is an actual upper bound on the empirical covariance with probability 1,
whereas this trick provides a guarantee with probability $1-\beta^{\tilde{\Sigma}}$.

\subsection{Generalizing CoinPress beyond multivariate sub-Gaussians}
In Figure~\ref{appendix:fig:multivariate_laplace} we provide evidence that our generalization of CoinPress beyond sub-Gaussian distributions delivers on its promises.
We pretend as if the estimator and data were such that the distributions induced by the BLB were 
are dominated by the multivariate Laplace (i.e. they are sub-Exponential),
and the analyst overestimated the relevant parameters by a factor of 100. 
We show results corresponding to two different methods for calculating the clipping parameters at each step of CoinPress.
The \emph{analytic} solution calculates the bound using a theoretical bound given in Theorem~\ref{appendix:theorem:bounding_R-MVL},
while the \emph{approximate} solution calculates an approximate upper bound using Monte Carlo sampling. 

\begin{figure}
    \captionsetup{font=footnotesize,labelfont=footnotesize}
    \begin{subfigure}[t]{.49\textwidth}
        \centering
        \includegraphics[width=\textwidth]{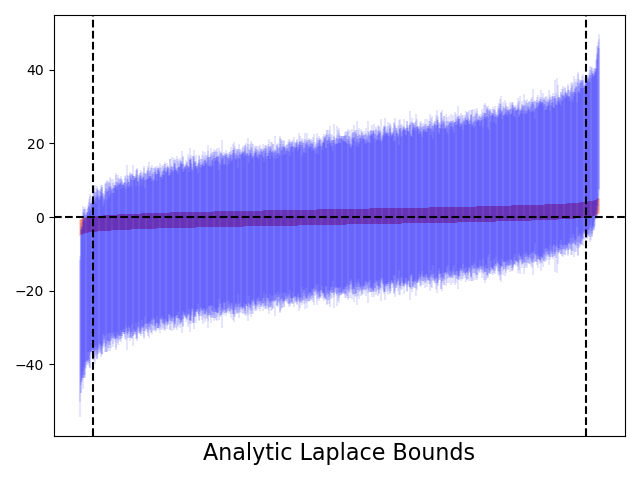}
    \end{subfigure}
    \begin{subfigure}[t]{.49\textwidth}
        \centering
        \includegraphics[width=\textwidth]{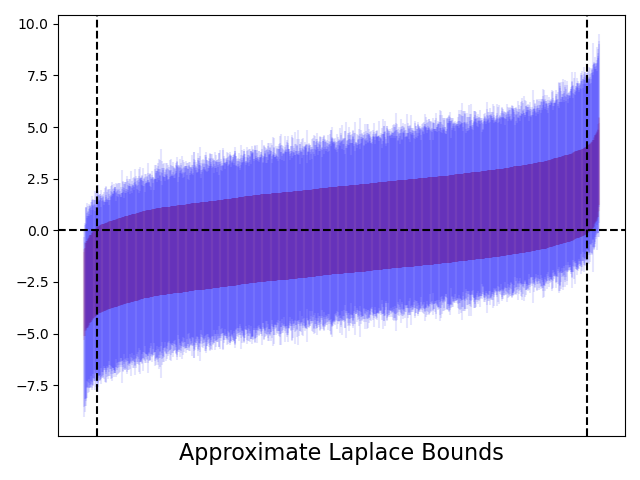}
    \end{subfigure}
    \caption{Distribution of coefficient estimates and 95\% confidence intervals for  
                $k = 5{,}000, d = 10, \rho = 0.1$ for multivariate Laplace distribution}
    \label{appendix:fig:multivariate_laplace}
\end{figure}

\section{Step 3: Postprocessing}
\label{appendix:section:parameter_estimation}

\subsection{Proof of Theorem~\ref{theorem:precision_weighting}}
\label{appendix:subsection:proof:precision_weighting}
\begin{proof}
	Our goal is to find weights $\{A_m\}_{m \in [t]}$ with $A_m \in \R^{d \times d}$ such that the 
	L{\"o}wner order of $\text{Cov} \left( \sum_{m=1}^{t} A_m \hat{\tau}_m \right)$ is minimized.
	Because we want our weighted estimator to remain unbiased, we restrict ourselves to 
	sets of $A_m$ such that $\sum_{m=1}^{t}A_m = I_d$.

	We note that the $A_m$ are constants and $\hat{\tau}_m$ are independent, so 
	\begin{align*}
		\text{Cov}\left( \sum_{m=1}^{t} A_m \hat{\tau}_m \right)
			&= \sum_{m=1}^{t} \text{Cov} \left( A_m \hat{\tau}_m \right) \\
			&= \sum_{m=1}^{t} A_m^T \text{Cov} \left( \hat{\tau}_m \right) A_m.
	\end{align*}

	Assume $\hat{\tau}_m \in \R^{d}$ and let $\{B_m\}_{m \in [t]}$ with $B_m \in \R^{d \times d}$ be 
	an arbitrary weighting. Then we can write 
	\begin{align*} 
		\text{Cov}\left( \sum_{m=1}^{t} A_m \hat{\tau}_m \right) &\preceq \text{Cov}\left( \sum_{m=1}^{t} B_m \hat{\tau}_m \right) \\
		\iff \forall v \in \R^d \setminus \{0\}: v^T \text{Cov}\left( \sum_{m=1}^{t} A_m \hat{\tau}_m \right) v &\leq v^T \text{Cov}\left( \sum_{m=1}^{t} B_m \hat{\tau}_m \right) v.
	\end{align*}
	Note that the quantities on the righthand side of the statement above are scalars, 
	so we have translated the problem of finding a minimal L{\"o}wner bound into minimizing a one-dimensional 
	quantity. 

	Let $v \in \R^d \setminus \{0\}$ be arbitrary.
	We now have a one-dimensional constrained optimization problem; 
	we want to find $\{A_m\}_{m \in [t]}$ which minimizes 
	$v^T \text{Cov}\left( \sum_{m=1}^{t} A_m \hat{\tau}_m \right) v$ 
	subject to $\sum_{m=1}^{t}A_m = I_d$.
	We can solve this using a Lagrange multiplier.

	We write
	\[ \mathcal{L}\left( \{A_m\}_{m \in [t]}, \lambda \right) = 
		v^T \text{Cov}\left( \sum_{m=1}^{t} A_m \hat{\tau}_m \right) v - \lambda v^T \left( \sum_{m=1}^{t} A_m - I_d \right) v 
	\]
	and differentiate with respect to $A_m$. Recall that $\text{Cov} \left( \hat{\tau}_m \right) = S_m$.
	Then we have
	\begin{align}
		\frac{\partial \mathcal{L}\left( \{A_m\}_{m \in [t]}, \lambda \right)}{\partial A_m}
			&= \frac{ \partial v^T \text{Cov}\left( \sum_{m=1}^{t} A_m \hat{\tau}_m \right) v - \lambda v^T \left( \sum_{m=1}^{t} A_m - I_d \right) v }{ \partial A_m } \nonumber \\
			&= \frac{ \partial \left( \sum_{m=1}^{t} v^T A_m^T \text{Cov} \left( \hat{\tau}_m \right) A_m v \right) - \lambda v^T \left( \sum_{m=1}^{t} A_m - I_d \right) v }{ \partial A_m } \nonumber \\
			&= S_m A_m v v^T + S_m^T A_m v v^T - \lambda v v^T \label{appendix:ln:matrix_calc_identity_1} \\
			&= 2 \left( S_m A_m - \lambda I_d \right) v v^T. \nonumber
	\end{align}
	(\ref{appendix:ln:matrix_calc_identity_1}) comes from a matrix calculus identity that for vectors $a,b$ and matrix $C$ all independent of $X$,
	$\frac{\partial (Xa)^T C (Xb)}{\partial X} = C X b a^T + C^T X a b^T$ and noting that the partial 
	with respect to $A_m$ influences the sum only in the $m^{th}$ term.

	We set this to $0$ to find a stationary point. 
	\begin{align}
		0 &= 2 \left( S_m A_m - \lambda I_d \right) v v^t \nonumber \\
		\lambda I_d v v^T &= S_m A_m v v^t \nonumber \\
		A_m &= \lambda S_m^{-1} I_d v v^T (v v^T)^{-1} \nonumber \\
			&= \lambda S_m^{-1}. \nonumber
	\end{align}
	We know from our constraint that $\sum_{m=1}^{t} A_m = I_d$, so 
	\begin{align}
		\sum_{m=1}^{t} \lambda S_m^{-1} &= I_d \nonumber \\
		\lambda &= \left( \sum_{m=1}^{t} S_m^{-1} \right)^{-1}, \nonumber
	\end{align}
	and thus our stationary point is achieved at $A_m = \left( \sum_{m=1}^{t} S_m^{-1} \right)^{-1} S_m^{-1}$.

	We have shown that choosing $A_m$ in this way achieves a stationary point, but we want to show that it is a global minimum.
	For that, we need to check the second partial derivative test, which states that our stationary point is a global minumum 
	if $\frac{ \partial^2 \mathcal{L}\left( \{A_m\}_{m \in [t]}, \lambda \right) }{\partial^2 A_m}$ is PD.

	We first note that 
	\begin{align}
		\frac{ \partial^2 \mathcal{L}\left( \{A_m\}_{m \in [t]}, \lambda \right) }{\partial^2 A_m} \nonumber 
			&= \frac{\partial}{\partial A_m} 2 \left( S_m A_m - \lambda I_d \right) v v^T \nonumber \\
			&= 2 (v v^T) \otimes S_m, \nonumber
	\end{align}
	where $\otimes$ is the Kronecker product.

	We know $v v^T$ is PD, because $\forall z \in \R^d \setminus \{0\}$ we get 
	$z^T v v^T z = (z^T v) (v^T z) = (v^T z)^T (v^T z) > 0$. 
	The strict inequality comes because we know that both $v$ and $z$ are non-zero. 
	We know $S_m$ is PD by assumption and that, in general, if a matrix $Y$ is PD then 
	so is $2Y$. Finally, the Kronecker product of PD matrices is also PD, so $2 (v v^T) \otimes S_m$ is PD 
	and our second partial derivative condition is met. So $\mathcal{L}$ is convex and our local minimum 
	is also a global minimum.
	Thus, our choice of $A_m$ 
	achieves the $\text{Cov}\left( \sum_{m=1}^{t} A_m \hat{\tau}_m \right)$ with minimal L{\"o}wner order.
\end{proof}

\subsection{Proof of Theorem~\ref{theorem:tilde_Sigma}}
\label{appendix:subsection:proof:tilde_Sigma}

\begin{proof}

	From Theorem~\ref{theorem:mean_est_form}, 
	we know that, with probability $\geq 1 - \beta^{\tilde{\Sigma}}$: 
	\[
		\forall m \in [t]: 
		\tilde{S}_m \sim \text{N} \left( \hat{S}^{BLB}, \vec{\sigma}^2_{\tilde{\Sigma}, m} I_{d'} \right),
	\]
	where $\hat{S}^{BLB}$ is the flattened form of $\hat{\Sigma}^{BLB}$.
	Assumption~\ref{assump:blb_approx} then let's us substitute in $\hat{S}$ for $\hat{S}^{BLB}$.
	For the rest of the proof, we assume 
	that this condition is met.

	Thus, by Theorem~\ref{theorem:precision_weighting}, we know that
	a precision-weighted $\tilde{S}$ will have mean $\E[\tilde{S}] = \hat{S}$ and covariance
	$\text{Cov} \left( \tilde{S} \right) = \left( \Sigma_{m=1}^{t} \vec{\sigma}^{-2}_{\tilde{\Sigma}, m} I_{d'} \right)^{-1}$
	Moreover, this $\tilde{S}$ is itself multivariate Gaussian because it is a linear combination 
	of multivariate Gaussians.
	That is, we can write 
	\[ 
		\tilde{S} 
			\sim \text{N} \left( \hat{S}, \left( \sum_{m=1}^{t} \vec{\sigma}^2_{\tilde{\Sigma}, m} I_{d'} \right)^{-1} \right)
			=:
			\text{N} \left( \hat{S}, \vec{\sigma}^2_{\tilde{S}} I_{d'} \right).
	\]
	Let $\tilde{\Sigma}'$ be the unflattened matrix constructed from $\tilde{S}$.
	Then we can write $\tilde{\Sigma}'_{i,j} \sim N \left( \hat{\Sigma}_{i,j}, b_{i,j}^2 \right)$,
	where $b_{i,j} = \text{unflatten} \left( \vec{\sigma}_{\tilde{S}} \right)_{i,j}$.
	Then, by Theorem 1.1 from~\citet{BVH16}, we know that 
	\begin{equation*}
		\E \norm{ \tilde{\Sigma}' - \hat{\Sigma} }_2
			\leq (1+\epsilon) \left( 2 \max_{i \in [d]} \norm{b_{i, \cdot}}_2 + \frac{6 \sqrt{\log d}}{\log(1+\epsilon)} \max_{i,j \in [d] \times [d]} |b_{i,j}| \right),
	\end{equation*}
	for arbitrary $\epsilon \in (0, 1/2]$, where $\norm{\cdot}_2$ is the spectral norm.
	Moreover, by Corollary 3.9 from~\citet{BVH16} we have that, for any $t \geq 0$:
	\begin{equation*}
		\norm{ \tilde{\Sigma}' - \hat{\Sigma} }_2
			\leq (1+\epsilon) \left( 2 \max_{i \in [d]} \norm{b_{i, \cdot}}_2 + \frac{6 \sqrt{\log d}}{\log(1+\epsilon)} \max_{i,j \in [d] \times [d]} |b_{i,j}| \right) + t
	\end{equation*}
	with probability $\geq 1 - \exp \left( \frac{-t^2}{4 \max_{i,j} b_{i,j}^2}\right)$.
	Setting $t = \sqrt{ \frac{\ln(1 / \beta^{ub})}{ 4 \max_{i,j} b_{i,j}^2 } }$
	yields a $1-\beta^{ub}$ probability bound.

 	Now define 
	\[ 
		c = \min_{\epsilon \in (0, 1/2]} (1+\epsilon) \left( 2 \max_{i \in [d]} \norm{b_{i, \cdot}}_2 + \frac{6 \sqrt{\log d}}{\log(1+\epsilon)} \max_{i,j \in [d] \times [d]} |b_{i,j}| \right) + \sqrt{ \frac{\ln(1 / \beta^{ub})}{ 4 \max_{i,j} b_{i,j}^2 } }.
	\]
	The spectral norm $\norm{\cdot}_2$ of a matrix is its largest singular value (or equivalently, the square root 
	of the absolute value of its largest magnitude eigenvalue). So, if 
	$\norm{ \tilde{\Sigma}' - \hat{\Sigma} }_2 \leq c$, we know that the smallest eigenvalue of $\tilde{\Sigma}' - \hat{\Sigma}$ 
	is necessarily at least $-c$.
	Therefore, the smallest eigenvalue of $\tilde{\Sigma}' + cI_{d} - \hat{\Sigma}$ is at least $0$ or, equivalently,
	$\tilde{\Sigma} + c I_d \succeq \hat{\Sigma}$. This statement holds with probability $1 - \beta^{ub}$. Combining this 
	with the initial $1 - \beta^{\tilde{\Sigma}}$ probability guarantee on the form of our estimator completes the proof.
\end{proof}

The statement for $c$ simplifies significantly in the case where the $\{ \tilde{\Sigma}_m \}_{m \in [t]}$ are diagonal matrices 
(which occurs if we care only about confidence intervals for each parameter rather than a joint confidence region).
Let $q(p, \mu, \sigma^2) := \sqrt{2}\sigma \text{erf}^{-1}(2p-1) + \mu$ be the quantile function for a $N(\mu,\sigma^2)$ distribution 
where $\text{erf}^{-1}(\cdot)$ is the inverse error function.  

\begin{corollary}
	\label{theorem:tilde_Sigma_diagonal}
	Given diagonal covariance estimates and privacy variances $\{ \tilde{\Sigma}_m, \vec{\sigma}^2_{\Sigma, m} \}_{m \in [t]}$,
    let $\tilde{S}_m \in \R^{d'}$ be the flattened version of $\tilde{\Sigma}_m$.
    We can construct a precision-weighted estimator:
	$
		\tilde{S} := \frac{ \sum_{m=1}^{t} \tilde{S}_m / \vec{\sigma}_{\Sigma,m}^2 }{ \sum_{m=1}^{t} 1 / \vec{\sigma}_{\Sigma,m}^2 }.
    $

    Let $\tilde{\Sigma}'$ be the diagonal $d \times d$ matrix created by unflattening $\tilde{S}$ and $b$ be the unflattened $d \times d$ diagonal
    matrix where $b^2_{i,i} = \Var \left( \tilde{\Sigma}'_{i,i} \right)$ (i.e.\ the diagonal values of the covariance matrix of the flattened 
    precision-weighted estimator). For $\beta^{ub} \in (0,1)$, define $\vec{c} = \{ c_j \}_{j \in [d]}$ where
	\[
		c_j = q \left( 1 - \frac{\beta^{ub}}{d}, 0, b_{j,j}^2 \right).
	\]

	Then, for 
    $\tilde{\Sigma} = \tilde{\Sigma}' + \vec{c} I_{d}$
	we have
	$
        \Pr \left( \forall j \in [d]: \hat{\Sigma}_{j,j} \leq \tilde{\Sigma}_{j,j} \right) \geq 1 - \beta^{\tilde{\Sigma}} - \beta^{ub}. 
    $
\end{corollary}

\begin{proof}
	We start as in the proof in Section~\ref{appendix:subsection:proof:tilde_Sigma}, but we know additionally that $\tilde{\Sigma}'_{i,j} = 0$
	for $i \neq j$.
	We know that if our assumptions hold, which happens with probability $\geq 1 - \beta^{\tilde{\Sigma}}$, 
	we can write $\tilde{\Sigma}'_{j,j} \sim N \left( \hat{\Sigma}_{j,j}, b_{j,j}^2 \right)$
	where $b_{j,j} = \left( \vec{\sigma}_{\tilde{S}} \right)_j$.

	By definition of the quantile function, we know then that, for arbitrary $j \in [d]$:
	\begin{align*}
		1 - \frac{\beta^{ub}}{d} 
			&= \Pr \left( \tilde{\Sigma}'_{j,j} \leq q \left(1-\frac{\beta^{ub}}{d}, \hat{\Sigma}_{j,j}, b_{j,j}^2 \right) \right) \\
			&= \Pr \left( \tilde{\Sigma}'_{j,j} \leq \hat{\Sigma}_{j,j} + q \left(1-\frac{\beta^{ub}}{d}, 0, b_{j,j}^2 \right) \right) \\
			&= \Pr \left( \tilde{\Sigma}'_{j,j} - c_j \leq \hat{\Sigma}_{j,j} \right) && \left( \text{definition of } c_j \right) \\
			&= \Pr \left( \tilde{\Sigma}'_{j,j} + c_j \geq \hat{\Sigma}_{j,j} \right) && \left( \text{symmetry of the Gaussian} \right).
	\end{align*}

	Applying a union bound over the $d$ failure probabilities and combining with the $1 - \beta^{\tilde{\Sigma}}$ probability that our required
	assumptions hold yields the desired result.
\end{proof}

\subsection{Proof of Theorem~\ref{thm:tilde_theta}}
\label{appendix:subsection:proof:tilde_theta}

\begin{proof}
	From Theorem~\ref{theorem:mean_est_form}, 
	we know that, with probability $\geq 1 - \beta^{\tilde{\theta}}$: 
	\[
		\forall m \in [t]: 
		\tilde{\theta}_m \sim \text{N} \left( \hat{\theta}^{BLB}, \vec{\theta}^2_{\tilde{\theta}, m} I_{d'} \right),
	\]
	where $\hat{S}$ is the flattened form of $\hat{\Sigma}$.
	Assumption~\ref{assump:blb_approx} then let's us substitute in $\hat{\theta}$ for $\hat{\theta}^{BLB}$.

	However, we opt to use the trick from Section~\ref{appendix:subsection:trick_for_mean_estimation} to avoid having to make 
	Assumption~\ref{assump:initial_covariance_bounds}. Theorem~\ref{thm:tilde_theta} gives us a covariance bound 
	that, after appropriate scaling, satisfies Assumption~\ref{assump:initial_covariance_bounds} with probability 
	$1 - \beta^{\tilde{\Sigma}} - \beta^{ub}$, so we fold this into our failure probability.
	Our result then follows directly from the precision-weighting procedure in Theorem~\ref{theorem:precision_weighting}.
\end{proof}

\section{Confidence Region/Intervals}
\label{appendix:section:confidence_intervals}

\subsection{Proof of Theorem~\ref{theorem:general_cr_high_prob}}
\label{appendix:subsection:general_cr_high_prob}

\begin{proof}
	We assume that our estimation of $\tilde{\theta}$ and $\tilde{\Sigma}$ worked 
	as described at the top of Section~\ref{subsubsection:confidence_intervals}, which 
	comes with a $1 - \beta^{\tilde{\Sigma}} - \beta^{ub} - \beta^{\tilde{\theta}}$
	probability guarantee. This means that $\E(\tilde{\theta}) = \E(\hat{\theta}) = \theta$
	and $\tilde{\Sigma} \succeq \hat{\Sigma} \succeq \Sigma$ where our non-private estimator
	$\hat{\theta} \sim G(\theta, \Sigma)$.
	Furthermore, we assume that $Q_{\hat{\theta}}$ is heavier-tailed than $G$. 
	
	Summarizing this, we have a high-probability guarantee that three conditions hold: \\

	\noindent \textbf{(1)} $\E(Z) = \E(\hat{\theta}) = \theta$ \\
	\noindent \textbf{(2)} $\Sigma \preceq \tilde{\Sigma}$ \\
	\noindent \textbf{(3)} $Q_{\hat{\theta}}$ is heavier-tailed than $G$.

	Under these conditions, $Z$ and $\hat{\theta}$ have the same mean and $\hat{\theta}$ is more concentrated than $Z$, 
	so a confidence region that is valid for $Z$ is valid for $\hat{\theta}$. 
	In other words,
	\[ \forall \alpha \in (0,1): \Pr(Z \in C) \geq 1 - \alpha \implies \Pr(\theta \in C) \geq 1 - \alpha. \]
\end{proof}

The result for confidence intervals, rather than a single region, follows trivially by noting that a confidence interval is a 
1-dimensional confidence region.

\begin{corollary}[Confidence Intervals (valid with high probability)]
    \label{theorem:general_ci_high_prob}
    Let $Z$ be a $d$-dimensional random variable such that 
    $Z \sim Q_{\hat{\theta}} \left( \hat{\theta} + \text{N} \left(0, \Sigma_{\tilde{\theta}} \right), \tilde{\Sigma} \right)$.
    Suppose $\{ (ci^l_j, ci^u_j) \}_{j \in [d]}$ is a set of intervals such that 
    \[
        \forall j \in [d]: \Pr \left( Z_j \in (ci^l_j, ci^u_j) \right) \geq 1-\alpha_j,
    \]
    for some $\{\alpha_j\}_{j \in [d]}$ with $\alpha_j \in (0,1)$.
    Then, with probability $1 - \beta^{\tilde{\Sigma}} - \beta^{ub} - \beta^{\tilde{\theta}}$, 
    \[ \forall j \in [d]: \Pr \left( \hat{\theta}^{BLB}_j \in \left( ci^l_j, ci^u_j \right) \right) \geq 1 - \alpha_j. \]

	Likewise, with probability $1 - \beta^{\tilde{\Sigma}} - \beta^{ub} - \beta^{\tilde{\theta}}$, 
    \[\Pr \left( \forall j \in [d] \hat{\theta}^{BLB}_j \in \left( ci^l_j, ci^u_j \right) \right) \geq 1 - \sum_j^d \alpha_j. \]
\end{corollary}

\begin{proof}
	All but the last statement is a trivial application of Theorem~\ref{theorem:general_cr_high_prob} to the 1-dimensional case. 
	The last statement follows via a union bound.
\end{proof}

\section{Empirical Results}
\label{appendix:empirical_results}

\subsection{Logistic regression with imbalanced class output}
\label{appendix:section:logistic_reg_imbalanced}

In Figure~\ref{fig:logistic_regression} we show qualitatively similar results 
for a more challenging setting; logistic regression with imbalanced classes. 

We generate data as we did for Figure~\ref{fig:ols_regression} but run the outcome variable $y$ through a scaled logistic function to get 
a new outcome variable $y' \in \{0,1\}^n$.
Specifically, for $p_i = \frac{1}{1 + \exp(-X_i \beta)}$ and $\bar{p} = \frac{1}{n} \sum_{i=1}^{n}p_i$, we have 
$\Pr(y_i = 1) = \frac{p_i}{\bar{p}} \cdot 0.05$. This induces a minority class that occurs with probability 
$\approx 0.05$. Having such imbalanced classes introduces a practical problem in choosing 
a good $k$ for GVDP. If $\frac{n}{k}$ is small, it becomes likely that we will see 
only the majority class in any given subset and the model will not be able to be fit.
Thus, our experiments here have larger $n$ than we used for the OLS experiments.

\begin{figure}[h!]
    \captionsetup{font=footnotesize,labelfont=footnotesize}
    \begin{subfigure}[t]{.32\textwidth}
        \centering
        \includegraphics[width=\textwidth,height=4cm]{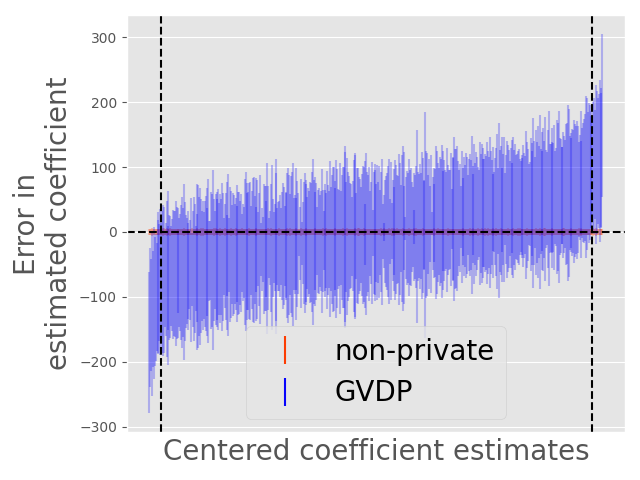}
        \caption{$n = 10{,}000, k = 250$}
    \end{subfigure}
    \begin{subfigure}[t]{.32\textwidth}
        \centering
        \includegraphics[width=\textwidth,height=4cm]{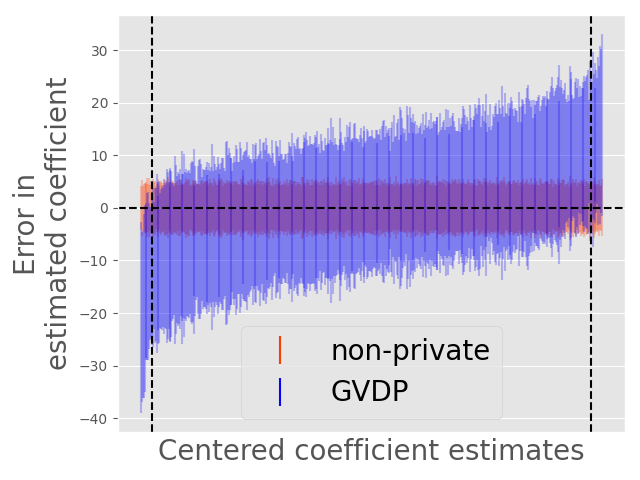}
        \caption{$n = 100{,}000, k = 1{,}000$}
    \end{subfigure}
    \begin{subfigure}[t]{.32\textwidth}
        \centering
        \includegraphics[width=\textwidth,height=4cm]{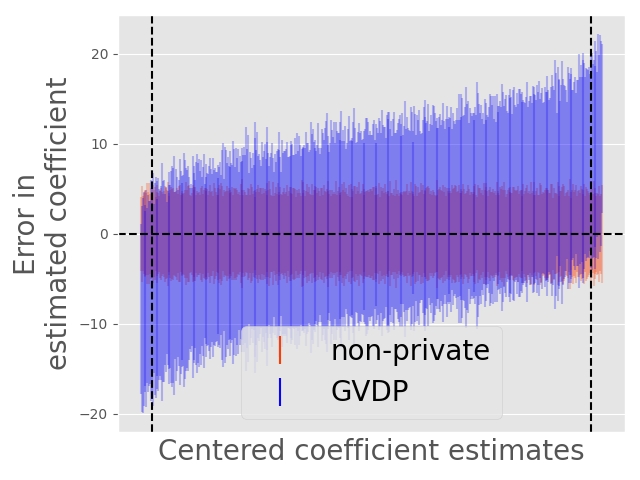}
        \caption{$n = 100{,}000, k = 2{,}000$}
    \end{subfigure}
    \caption{Logistic Regression: Distribution of coefficient estimates and 95\% confidence intervals}
    \label{fig:logistic_regression}
\end{figure} 

\subsection{Logistic regression with fully sparse data}
\label{appendix:section:logistic_reg_imbalanced_binary}

The requisite bootstrap assumptions do not hold for all estimators and data distributions. 
We make our setting more difficult again for Figure~\ref{fig:logistic_regression_binary}, 
by making both the outcome and covariates a sparse binary vector/matrix respectively.
It is unlikely that an analyst would want to use GVDP in this setting, because knowing that the data are binary (which we assume an 
analyst would know)
immediately provides tight clipping bounds for the data. Nevertheless, we include it as an example because 
it's the most natural scenario we found where our method fails because of poor performance of the BLB. 

We create a new set of covariates $X'$
such that $\forall j \in [d]: X'_{i,j} = \mathbbm{1}(X_{i,j} \geq z_j)$ where 
$z_j = \min_{r \in \R} \frac{1}{n}\sum_{i=1}^{n} \mathbbm{1}(X_{i,j} > r) \leq 0.05$.
That is, $X'$ is itself now a binary matrix with highly imbalanced classes.
The rightmost plot shows distributions of the $d$ coefficient estimates induced by BLB, with a black dotted line 
at the value of the non-private coefficient. Note that at $n = 100{,}000$, the BLB distributions 
are essentially point masses at two extreme points, and our algorithm yields biased estimates and confidence intervals with insufficient coverage. 
For $n = 1{,}000{,}000$, the BLB distributions are much closer to being a symmetric 
distribution about the true coefficient value, and our algorithm yields the promised guarantees. 

The left plots show poor confidence interval 
coverage because the BLB distribution is not a good approximation of the non-private sampling distribution.
The right plots show better confidence interval coverage because the BLB approximation is successful.

\begin{figure}[h!]
    \captionsetup{font=footnotesize,labelfont=footnotesize}
    \begin{subfigure}[t]{.49\textwidth}
        \centering
        \includegraphics[width=\textwidth,height=4cm]{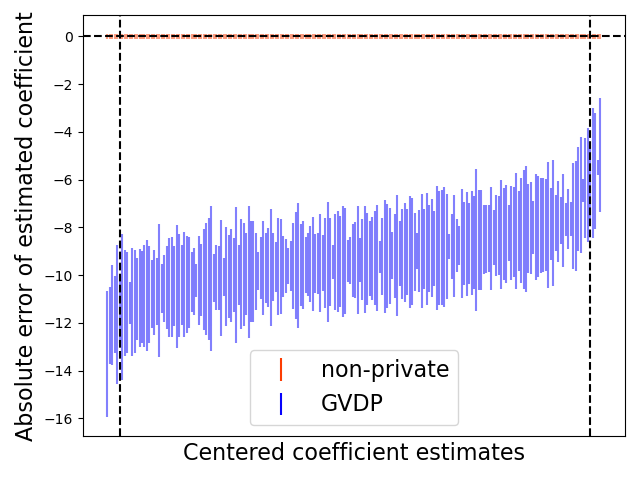}
        \caption{$n = 100{,}000, k = 500, d = 10$}
    \end{subfigure}
    \begin{subfigure}[t]{.49\textwidth}
        \centering
        \includegraphics[width=\textwidth,height=4cm]{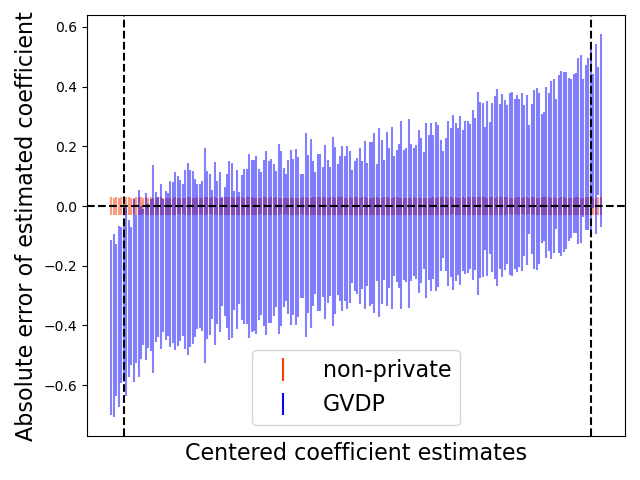}
        \caption{$n = 1{,}000{,}000, k = 500, d = 10$}
    \end{subfigure}
    \begin{subfigure}[t]{.49\textwidth}
        \centering
        \includegraphics[width=\textwidth,height=4cm]{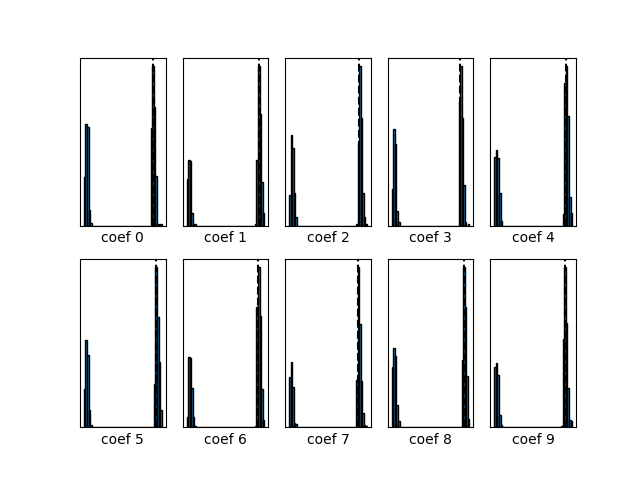}
        \caption{BLB coefficient distributions for\\$n = 100{,}000, k = 500$}    
    \end{subfigure}
    \begin{subfigure}[t]{.49\textwidth}
        \centering
        \includegraphics[width=\textwidth,height=4cm]{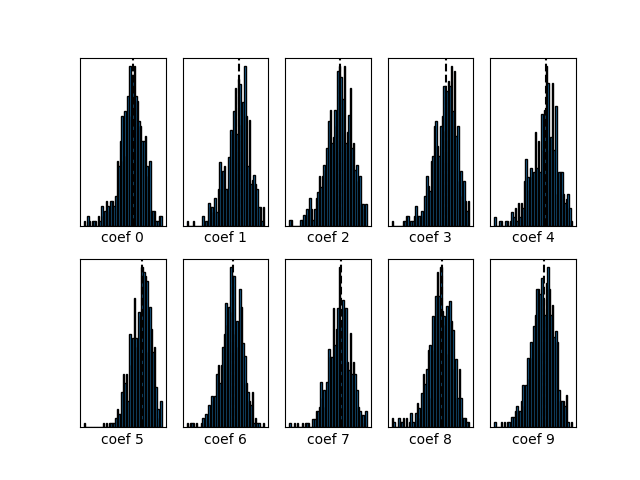}
        \caption{BLB coefficient distributions for\\$n = 1{,}000{,}000, k = 500$}
    \end{subfigure}
    \caption{Logistic Regression with unbalanced binary fatures: Distribution of coefficient estimates and 95\% confidence intervals}
    \label{fig:logistic_regression_binary}
\end{figure} 

\subsection{Explanation of Table~\ref{tab:adassp_comparison}}
\label{appendix:section:explanation_adassp_comparison}

The GVDP and AdaSSP algorithms differ in a few key ways. 
First, AdaSSP does not attempt to do unbiased parameter estimation or give valid confidence intervals; instead, 
it is trying to estimate OLS coefficients with minimal $\ell_2$ error.
For purposes of comparison, we will ignore confidence intervals altogether and focus only 
on the parameter estimates. 
Second, AdaSSP assumes only bounds on the data, assuming that we can specify data domains 
$\mathcal{X}, \mathcal{Y}$ for our covariates and outcome, respectively,
such that $\Vert \mathcal{X} \Vert = \sup_{x \in \mathcal{X}} \Vert x \Vert_2$ where $x \in \R^m$ 
and $\Vert \mathcal{Y} \Vert = \sup_{y \in \mathcal{Y}} \vert y \vert$.  
Ignoring the assumptions needed for confidence intervals for now, 
GVDP requires Assumptions~\ref{assump:initial_parameter_bounds}, and 
\ref{assump:initial_covariance_bounds} on the
distribution of covariances induced by the bag of little bootstraps, as well as 
Assumption~\ref{assump:initial_parameter_bounds} on the distribution of means.
It's worth noting the qualitative difference between these methods; AdaSSP requires the user to bound the data, 
GVDP requires the user to bound moments of the parameter distribution. For most analyses, we expect the 
AdaSSP bounds to be easier to specify tightly than those of GVDP. However, GVDP is designed to scale more gracefully 
under overly conservative bounds. 

We generate data just as we did for our OLS demonstration and 
compare AdaSSP and GVDP across a number of what we call ``overestimation factors''.
Say we have realized data $X \in \R^{n \times d}, y \in \R^d$. 
For an overestimation factor of $c$, we set the bounds for AdaSSP to 
$c \cdot \sup_{x \in X} \Vert x \Vert_2$ and $c \cdot \sup_{y \in Y} \Vert y \Vert$.
For GVDP, we perform the BLB step to get our $\{ \hat{\theta}_i^{BLB} \}_{i \in [k]}$,
which we'll say has empirical mean $\hat{\mu} \in \R^d$ and empirical covariance $\hat{\Sigma} \in \R^{d \times d}$. 
We set our $\ell_2$ bounding ball for the mean of the distribution as 
$B_2 \left( \hat{\mu}, c \left(\max_{j \in [d]}\hat{\mu}_j\right) \right)$ and 
our L{\"o}wner upper bound on the covariance as 
$c \left( \text{diag}(\hat{\Sigma}) I_d \right)$. Runs of non-private OLS are included for comparison, 
but the overestimation factor does not affect them.

The experiment in the body of the paper was run with $n = 500{,}000, k = 2{,}500, d = 10$, and $\rho = 0.1$.
We run each method over 100 simulations, estimating $d$ coefficients at each iteration,
so each method produces $1{,}000$ coefficient estimates overall.

\paragraph{Comparison with AdaSSP}
\label{subsection:comparison_with_adassp}
We again consider OLS, but now compare GVDP's performance to that of the Adaptive Sufficient Statistic Perturbation (AdaSSP) algorithm 
from~\citet{Wan18}, one of the best-performing algorithms for DP OLS. 
AdaSSP assumes bounds on the underlying data and attempts to estimate the OLS coefficients with minimal $\ell_2$ error.
We consider the performance of AdaSSP vs. GVDP as a function of the ``overestimation factor'' (OF), which is a multiplicative 
factor by which we overestimate the bounds of the data (for AdaSSP) or parameters (for GVDP).

\begin{table}
    \captionsetup{font=footnotesize,labelfont=footnotesize}
    \scriptsize
    \centering
    \begin{tabular}{lrrrrrrrr}
        \toprule
        OF      &    1     &  1.5          &  4       &  5       &   10      &   100    &  1000 & 10000 \\
        \midrule
        non-private & 39.23 & 39.23 & 39.23 & 39.23 & 39.23 & 39.23 & 39.23 & 39.23 \\
        AdaSSP & 40.31 & 49.66 & 1801.04 & 23395.30 & 71790.88 & 46382.71 & 82803.01 & 76377.38 \\
        GVDP & 58.23 & 59.01  & 58.64 & 57.74 & 58.61 & 61.57 & 70.24 & 102.05 \\
        \bottomrule
    \end{tabular}
    \caption{Average $\ell_2$ estimation error for each algorithm by overestimation factor (OF)}
    \label{tab:adassp_comparison}
\end{table}

Despite our presentation above, AdaSSP and GVDP can't really be directly compared because the OFs have qualitatively
different meanings. However, the general comparison is still useful; AdaSSP performs well with slightly overestimated bounds but scales poorly with overly conservative bounds,
while GVDP performs a bit less well at low overestimation factors but scales much better when the bounds are poorly set. 
More information can be found in Section~\ref{appendix:section:explanation_adassp_comparison}.

\subsection{Replication of~\citet{Car99}}
\label{subsection:replication_of_card}

Inspired by the tests of~\citet{BWSB21}, we attempt to replicate the core analysis of \citet{Car99}
under the constraints of DP. We use CPS ASEC data from 1994 to 1996~\citep{RFFGPSS21} and run OLS 
to estimate the following model:
\[ \log(\text{inc\_wage}) = \beta_0 + \beta_1 \text{educ} + \beta_2 \text{PE} + \beta_3 \text{PE}^2 + \beta_4 \text{PE}^3 + \beta_5 \text{white} + \epsilon, \]
where \emph{inc\_wage} is an individual's total pre-tax wage and salary income, \emph{educ} is 
years of education, \emph{PE} is potential years of work experience, 
and \emph{white} indicated whether or not the individual identifies as white.
The effect of education on income is our question of interest, with the other variables serving as controls.
This allows us to use the full OLS model within the bootstrap, but release and privately estimate only 
the estimated mean/variance of $\beta_1$.

\citet{Car99} runs this model separately for males and females; in Figure~\ref{appendix:fig:card_reanalysis} we report the female results ($n = 95{,}177$) as well as for males 
and females combined ($n = 197{,}756$).
To be consistent with~\citet{BWSB21}, we report results in \emph{approximate DP} with 
$(\epsilon, \delta) = (5, 1/k)$, which translates to $\rho \approx \{ 0.879, 1.06, 1.23 \}$ for $k = \{1000,500,250\}$.
All results are run with $t=5$ CoinPress iterations and an overestimation factor of 100, and we 
run the GVDP estimation algorithm $200$ times to show the long-run performance of the algorithm.

\begin{figure}[h]
    \captionsetup{font=footnotesize,labelfont=footnotesize}
    \begin{subfigure}[t]{.32\textwidth}
        \centering
        \includegraphics[width=\textwidth]{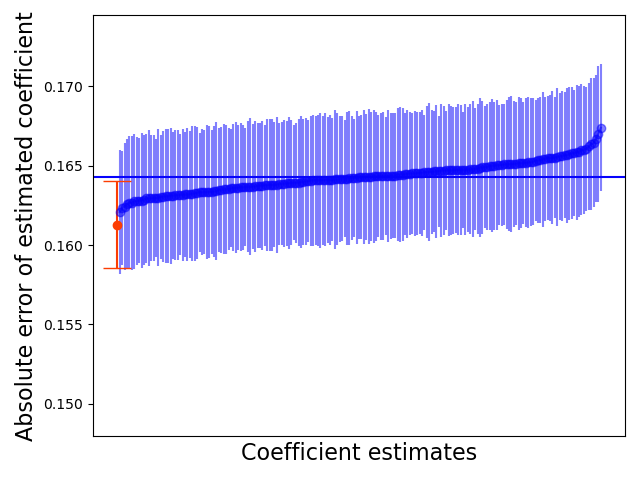}
        \caption{female: $k = 1{,}000$}
    \end{subfigure}
    \begin{subfigure}[t]{.32\textwidth}
        \centering
        \includegraphics[width=\textwidth]{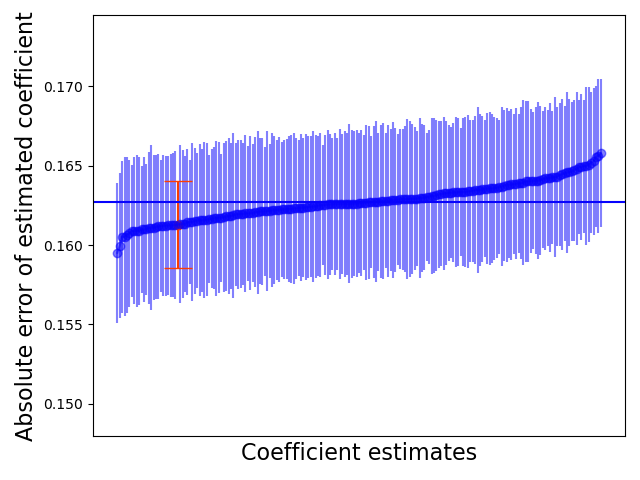}
        \caption{female: $k = 500$}
    \end{subfigure}
    \begin{subfigure}[t]{.32\textwidth}
        \centering
        \includegraphics[width=\textwidth]{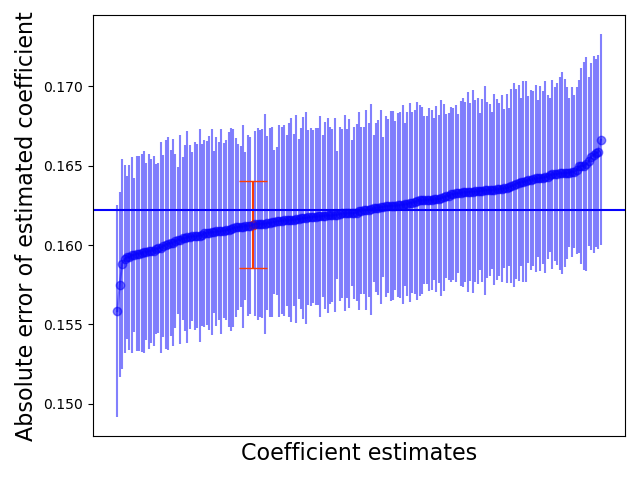}
        \caption{female: $k = 250$}
    \end{subfigure}
    \begin{subfigure}[t]{.32\textwidth}
        \centering
        \includegraphics[width=\textwidth]{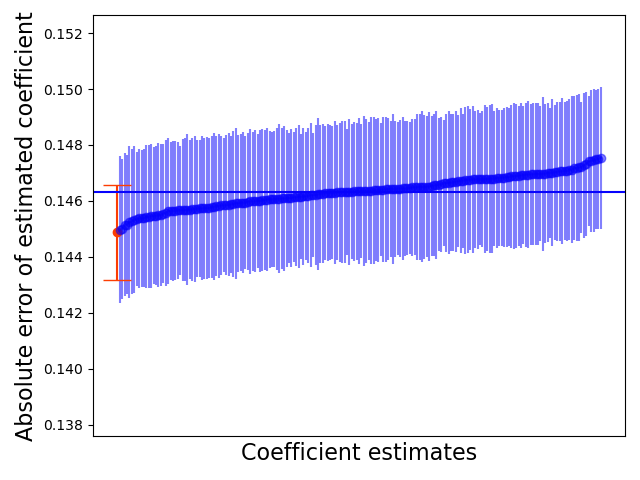}
        \caption{combined: $k = 1{,}000$}
    \end{subfigure}
    \begin{subfigure}[t]{.32\textwidth}
        \centering
        \includegraphics[width=\textwidth]{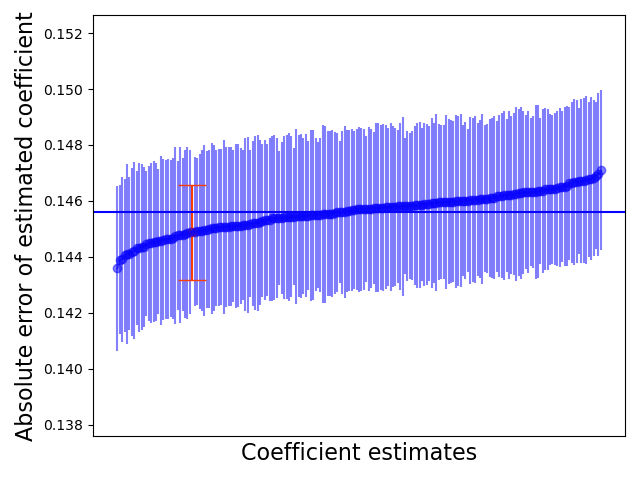}
        \caption{combined: $k = 500$}
    \end{subfigure}
    \begin{subfigure}[t]{.32\textwidth}
        \centering
        \includegraphics[width=\textwidth]{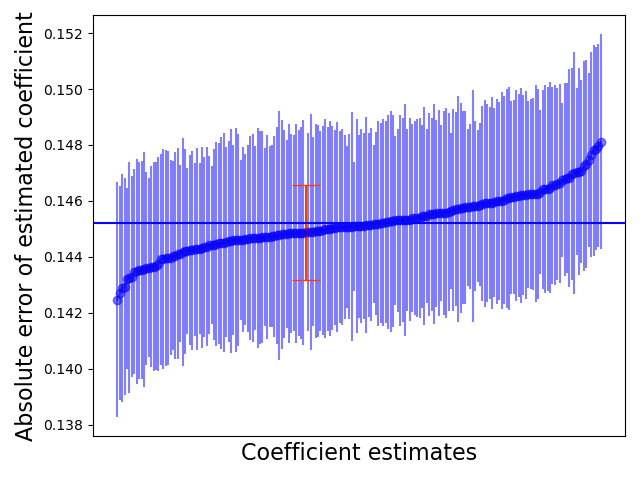}
        \caption{combined: $k = 250$}
    \end{subfigure}
    \caption{Distribution of coefficient estimates and 95\% confidence intervals for  
        females only and both males and females (combined) 
        with $\epsilon = 5$. 
        The dot with a capped error bar represents the non-private estimate and confidence interval.
        The wider bars are the upper/lower bounds on the confidence intervals for the runs of GVDP.
        The horizontal line is the empirical mean of the GVDP estimates.
    } 
    \label{appendix:fig:card_reanalysis}
\end{figure}

We note that our bootstrapped means do not always equal to the non-private mean in expectation, 
so although our algorithm's guarantees with respect to the bootstrapped
distribution are met, they do not imply guarantees relative to the non-private answer as we hope they would.
We see in these plots a clear trade-off. 
At $k = 1{,}000$, our confidence intervals are fairly tight, but are essentially centered around the 
upper end of the non-private confidence interval rather than the the true coefficient estimate.
At $k = 250$ we minimize bias by generating more a representative bootstrap
distribution, but do so at the cost of wider confidence intervals. 

\section{Notes on Assumptions and Analyst Choices}

\subsection{Assumption~\ref{assump:blb_approx}}

Assumption~\ref{assump:blb_approx} essentially has three distinct pieces; we speak to the plausibility of each below.

First, we assume that the sampling distribution of the estimator is a member of a symmetric multivariate location-scale family,
which we require because we want to be able to fully characterize the distribution by its mean and covariance.
Given that $X$ is of a reasonable sample size, we can appeal to the central limit theorem and argue that this assumption ought to hold.
If the analyst cares only about confidence intervals for each element of the parameter,
rather than a joint confidence region, it is sufficient for the marginal sampling distribution of each element
in the parameter vector to belong to a symmetric univariate location-scale family.

Second, we assume that the BLB estimator in unbiased with respect to the estimand of interest in the non-private setting.
The BLB shares many of the statistical properties of the traditional bootstrap,
including asymptotic consistency (as both $n \rightarrow \infty$ and $\frac{n}{k} \rightarrow \infty$),
but also no finite-sample guarantee of unbiasedness. 
As such, this assumption may not hold in practice. However, if the BLB estimator exhibits 
low bias relative to the potential bias induced by poorly chosen clipping bounds, our 
method could still be an effective way to produce private estimates with lower bias than 
existing methods.

Third, we assume that the BLB estimates of the covariance are, with probability 1, a L{\"o}wner upper bound on the true covariance of the sampling distribution. 
This condition on $\hat{\Sigma}^{BLB}$ is onerous (especially in high dimensions) and seems unlikely to hold in general.
In practice however, this condition can be dropped at the cost of a bit of extra fuzziness in the results.
As stated above, we later make claims about our private estimator relative to $\tilde{\Sigma}^{BLB}$,
which under Assumption~\ref{assump:blb_approx} also hold relative to $\hat{\Sigma}$.
This generalization to $\hat{\Sigma}$ is a higher bar than is typically set in applications of the bootstrap, 
where the bootstrap approximation is simply treated as a ``good-enough'' approximation 
of the sampling distribution. Moreover, if we care only about getting confidence intervals, rather 
than a confidence region, we can replace the L{\"o}wner condition with the condition that each element 
of the diagonal of $\hat{\Sigma}^{BLB}$ is at least as large as the corresponding element of 
$\hat{\Sigma}$.

\subsection{Assumption~\ref{assump:user_heavier_tailed}}

Assumption~\ref{assump:user_heavier_tailed} essentially follows from the first part of 
Assumption~\ref{assump:blb_approx} where we assume the sampling distribution is from a symmetric
location-scale family. If we can identify the location-scale family of the sampling distribution 
(again, we often appeal to the central limit theorem and say this is Multivariate Gaussian), 
then this same family trivially satisfies Assumption~\ref{assump:user_heavier_tailed}.

\subsection{Assumptions~\ref{assump:initial_parameter_bounds} and~\ref{assump:initial_covariance_bounds}}

Assumptions~\ref{assump:initial_parameter_bounds} and~\ref{assump:initial_covariance_bounds}
state that the analyst can set bounds on the mean and covariance on the BLB estimates
of both the mean and covariance of the sampling distribution. 
We believe that setting tight bounds would be very difficult in general, often 
more difficult than setting tight bounds on the data (the requirement we're trying to avoid).
However, we suggest that the analyst aim to set very conservative bounds, unless they are very confident 
in their knowledge of the parameters. 
Because we use the CoinPress mean estimation algorithm to iteratively improve the bounds 
the analyst provides, the performance of the algorithm degrades slowly with more conservative bounds; 
e.g.\ see our results from Section~\ref{section:ols_regression_demo} where the analyst's bounds are 
too conservative by a factor of 100 or Section~\ref{subsection:comparison_with_adassp} where we show
performance when the analyst's bounds are too conservative by a factor of 10{,}000.  

\subsection{Choosing $k$}
\label{appendix:subsection:choosing_k}

Recall from our explanation of Algorithm~\ref{alg:gvdp} that $k$ is the number of subsets into 
which we partition our original data, which in turn becomes the number of elements fed into our 
private mean estimation algorithm, Algorithm~\ref{appendix:alg:hd_mean_est}.
This presents a trade-off for the user; when $k$ is large, the sensitivity of our aggregator decreases
(Line~\ref{appendix:ln:hd_mean_step-sens} of Algorithm~\ref{appendix:alg:hd_mean_step}) and thus so does the 
variance of the noise we need to add for privacy. 
On the other hand, we assume that the mean and covariance estimates we get from the BLB 
reasonably approximate the mean and covariance of the true sampling distribution of the parameters, 
which is provably true only as $n \rightarrow \infty$ and 
$\frac{n}{k} \rightarrow \infty$ for Hadamard differentiable estimators~\citep{KTSJ14}.

In the body of the paper, we argued that once $\frac{n}{k}$ is large enough that the BLB estimates have converged, there is 
no use in further increasing the ratio of $n$ to $k$; we are better served by increasing $k$
and reducing the noise needed for privacy. So, the best possible case for an analyst 
is that they choose the largest $k$ such that the BLB estimates, operating 
over subsets of size $\frac{n}{k}$, approximates the true parameters of the sampling distribution.

As a final note, recall from Section~\ref{subsection:algorithm_step_1} that we assume that $n$ is public knowledge 
or has been privately estimated. Thus, we can choose $k$ in a way that depends on $n$ with no extra privacy cost;
if $n$ is public knowledge then there's no dependence on the data at all and if it was privately 
estimated then this falls under the postprocessing property of zCDP.

\end{document}